\newcommand{\myparagraph}[1]{\vspace{1.5pt}\noindent {\bf #1.}}
\mathchardef\mhyphen="2D
\newcommand{\whp}[1]{\textit{whp}}
\newcommand{\defn}[1]{{\bf{\emph{#1}}}}
\newcommand{\BigO}[1]{\ensuremath{O(#1)}}
\newcommand{\BigOmega}[1]{\ensuremath{\Omega(#1)}}
\newcommand{\pcomplete}{\ensuremath{\mathsf{P}}-complete}
\newcommand{\nc}{\ensuremath{\mathsf{NC}}}
\newcommand{\algname}[1]{\textnormal{\textsc{#1}}}
\newcommand{\algprefix}{Algorithm\xspace}
\newcommand{\kclist}{\algname{kClist}\xspace}
\newcommand{\pivoter}{\algname{Pivoter}\xspace}
\newcommand{\wco}{\algname{WCO}\xspace}
\newcommand{\binaryjoin}{\algname{BinaryJoin}\xspace}
\newcommand{\coreapp}{\algname{CoreApp}\xspace}
\newcommand{\ourcount}{\algname{arb-count}\xspace}
\newcommand{\ourpeel}{\algname{arb-peel}\xspace}
\newcommand{\ourapproxpeel}{\algname{arb-approx-peel}\xspace}
\newcommand{\ourcountv}{\algname{arb-count-v}\xspace}
\newcommand{\intheapp}{appendix}
\newcommand{\kc}{$k$-clique\xspace}
\newcommand{\kcds}{$k$-clique densest subgraph\xspace}
\algnewcommand\algorithmicparfor{\textbf{parfor}}
\algnewcommand\algorithmicpardo{\textbf{do}}
\algnewcommand\algorithmicendparfor{}
\definecolor{ao}{rgb}{0.0, 0.5, 0.0}
\newcommand{\tblsty}[1]{{\bf{\textcolor{ao}{#1}}}}
\begin{document}

\title{Parallel Clique Counting and Peeling Algorithms} 

\author{Jessica Shi\thanks{MIT CSAIL, Cambridge, MA (jeshi@mit.edu, laxman@mit.edu, jshun@mit.edu)}
  \and Laxman Dhulipala\footnotemark[1]
\and Julian Shun\footnotemark[1]}

\date{}

\maketitle

\fancyfoot[R]{\scriptsize{Copyright \textcopyright\ 2021 by SIAM\\
Unauthorized reproduction of this article is prohibited}}

\begin{abstract}

We present a new parallel algorithm for $k$-clique counting/listing
that has polylogarithmic span (parallel time) and is work-efficient (matches the work of the best sequential algorithm)
for sparse graphs.
Our algorithm is based on computing low out-degree orientations, which we present 
new linear-work and polylogarithmic-span algorithms for computing in parallel.
We also present new parallel algorithms for
producing unbiased estimations of clique counts using graph
sparsification. Finally, we design two new parallel work-efficient
algorithms for approximating the $k$-clique densest subgraph,
the first of which is a $1/k$-approximation
and the 
second of which is a $1/(k(1+\epsilon))$-approximation and has polylogarithmic span. Our first algorithm does not have polylogarithmic span, but we prove that it solves
 a \pcomplete{} 
 problem.

In addition to the theoretical results, we also implement the algorithms and
propose various optimizations to improve their practical performance.
On a 30-core machine with two-way hyper-threading, our algorithms
achieve 13.23--38.99x and 1.19--13.76x self-relative parallel speedup
for $k$-clique counting and $k$-clique densest subgraph,
respectively. Compared to the state-of-the-art parallel $k$-clique
counting algorithms, we achieve up to 9.88x speedup, and compared to
existing implementations of $k$-clique densest subgraph, we achieve up 
to 11.83x speedup.  We are able to compute the $4$-clique counts on
the largest publicly-available graph with over two hundred billion
edges for the first time.

\end{abstract}

\section{Introduction}

Finding \kc{}s in a graph is a fundamental
graph-theoretic problem with a long history of study both in theory
and practice. In recent years, \kc counting and listing have
been widely applied in practice due to their many applications,
including in learning network embeddings~\cite{rossi2018hone},
understanding the structure and formation of
networks~\cite{yin2018higher, TsPaMi17}, identifying dense subgraphs for
community detection~\cite{Tsourakakis15, Sariyuce2017, FaYuChLaLi19,Gregori2013}, and
graph partitioning and compression~\cite{Feder95}.

For sparse graphs, the best known sequential algorithm is by Chiba and
Nishizeki~\cite{ChNi85}, and requires $O(m\alpha^{k-2})$ work
(number of operations), where $\alpha$ is the arboricity of the
graph.\footnote{A graph has arboricity $\alpha$ if the minimum number of spanning forests needed to cover the graph is $\alpha$.} The state-of-the-art clique parallel \kc counting algorithm is
\kclist~\cite{Danisch18}, which achieves the same work bound,
but does not have a strong theoretical bound on the
span (parallel time). Furthermore, \kclist as well as existing parallel \kc counting
algorithms have limited scalability for graphs with more than a few
hundred million edges, but real-world graphs today frequently contain
billions to hundreds of billions of edges~\cite{Meusel2015}.

\myparagraph{\kc Counting}
In this paper, we design a new parallel \kc counting algorithm,
\ourcount that matches the work of Chiba-Nishezeki, has
polylogarithmic span, and has improved space complexity compared to
\kclist. Our algorithm is able to significantly outperform \kclist and
other competitors, and scale to larger graphs than prior work.
\ourcount is based on using low out-degree orientations of the graph
to reduce the total work.
Assuming that we have a low out-degree ranking of the graph, we show
that for a constant $k$ we can count or list all $k$-cliques in
$\BigO{m\alpha^{k-2}}$ work, and $\BigO{k \log n + \log^2 n}$ span with high probability (\whp{}),\footnote{We say $O(f(n))$ \defn{with
    high probability (\whp{})} to indicate $O(cf(n))$ with probability
  at least $1-n^{-c}$ for $c \geq 1$, where $n$ is the input size.}
where $m$ is the number of edges in the graph and $\alpha$ is the
arboricity of the graph.  Having work bounds parameterized by $\alpha$ is
desirable since most real-world graphs have low
arboricity~\cite{DhBlSh18}. Theoretically, \ourcount requires
$\BigO{\alpha}$ extra space per processor; in contrast, the \kclist
algorithm requires $\BigO{\alpha^2}$ extra space per processor. Furthermore, \kclist does not achieve polylogarithmic span.

We also design an approximate \kc{} counting algorithm
based on counting on a sparsified graph. We show in the \intheapp{} that our approximate algorithm
produces unbiased estimates and runs in $\BigO{pm\alpha^{k-2}+m}$
work and $\BigO{k \log n + \log^2 n}$ span \whp{} for a sampling
probability of $p$. 

\myparagraph{Parallel Ranking Algorithms}
We present two new parallel algorithms
for efficiently ranking the vertices, which we use for \kc{} counting.  We show that a distributed
algorithm by Barenboim and Elkin~\cite{Barenboim10} can be implemented
in linear work and polylogarithmic span. We also parallelize an
external-memory algorithm by Goodrich and Pszona~\cite{Goodrich11} and
obtain the same complexity bounds. 
We believe that our parallel ranking algorithms may be of
independent interest, as many other subgraph finding algorithms use
low out-degree orderings
(e.g.,~\cite{Goodrich11,Pinar2017,Jain2020}).

\myparagraph{Peeling and $k$-Clique Densest Subgraph}
We also present new parallel algorithms for the
  \kcds problem, a generalization of the densest subgraph problem
that was first introduced by Tsourakakis~\cite{Tsourakakis15}. This
problem admits a natural $1/k$-approximation by peeling vertices in
order of their incident \kc counts.  We present a parallel peeling
algorithm, \ourpeel, that peels all vertices with the
lowest \kc count on each round and uses \ourcount as a subroutine.
The expected amortized work of \ourpeel is $\BigO{m\alpha^{k-2}+ \rho_k(G)\log n}$ and
the span is $\BigO{\rho_k(G) k \log n + \log^2 n}$ \whp{}, where $\rho_k(G)$ is the
number of rounds needed to completely peel the graph.
We also prove in the \intheapp{} that the problem of obtaining the hierarchy given by
this process is \pcomplete{} for $k>2$, indicating that a polylogarithmic-span 
solution is unlikely.

Tsourakakis also shows that naturally extending the Bahmani et
al.~\cite{Bahmani12} algorithm for approximate densest subgraph gives
an $1/(k(1+\epsilon))$-approximation in $\BigO{\log n}$ parallel
rounds, although they were not concerned about work.
We present
an $\BigO{m\alpha^{k-2}}$ work and polylogarithmic-span algorithm,
\ourapproxpeel, for obtaining a $1/(k(1+\epsilon))$-approximation to
the \kcds problem. We obtain this work bound using our \kc algorithm as a subroutine.
Danisch et al.~\cite{Danisch18} use their \kc counting
algorithm as a subroutine to implement these two approximation
algorithms for \kcds, but their implementations do not have
provably-efficient bounds.

\myparagraph{Experimental Evaluation}
We present implementations of our algorithms that use various
optimizations to achieve good practical performance.  We
perform a thorough experimental study on a 30-core
machine with two-way hyper-threading and compare to prior work.
We show that on a variety of real-world graphs and different 
$k$, our \kc counting algorithm achieves 1.31--9.88x speedup over the
state-of-the-art parallel \kclist algorithm~\cite{Danisch18} and
self-relative speedups of 13.23--38.99x.
We also compared our \kc counting algorithm to other
parallel \kc counting implementations including Jain and Seshadhri's
\pivoter{} \cite{Jain2020}, Mhedhbi and Salihoglu's worst-case optimal
join algorithm (\wco) \cite{Mhedhbi2019}, Lai \textit{et al.}'s
implementation of a binary join algorithm (\binaryjoin{})
\cite{BinaryJoin19}, and Pinar \textit{et al.}'s \algname{ESCAPE}
\cite{Pinar2017}, and demonstrate speedups of up to several orders of
magnitude.

Furthermore, by integrating state-of-the-art parallel graph
compression techniques, we can process graphs
with tens to hundreds of billions of edges, significantly
improving on the capabilities of existing implementations. \emph{As
  far as we know, we are the first to report $4$-clique counts for
  Hyperlink2012, the largest publicly-available graph, with over
  two hundred billion undirected edges.}

We study the accuracy-time tradeoff of our sampling algorithm, and
show that is able to approximate the clique counts with 5.05\% error
5.32--6573.63 times more quickly than running our exact counting
algorithm on the same graph. We compare our sampling algorithm to
Bressan \textit{et al.}'s serial \algname{MOTIVO} \cite{Bressan2019},
and demonstrate 92.71--177.29x speedups.  Finally, we study our two
parallel approximation algorithms for \kcds and show that our we are
able to outperform \kclist by up to 29.59x and achieve 1.19--13.76x
self-relative speedup.  We demonstrate up to 53.53x speedup over Fang
\textit{et al.}'s serial \coreapp{} \cite{FaYuChLaLi19} as well.

The contributions of this paper are as follows:

\begin{enumerate}[label=(\textbf{\arabic*}),topsep=0pt,noitemsep,parsep=0pt,leftmargin=15pt]
  \item A parallel algorithm with  $\BigO{m\alpha^{k-2}}$  and polylogarithmic 
  span \whp{} for \kc counting.
  \item  Parallel algorithms for  low out-degree
  orientations with $O(m)$ work and $O(\log^2n)$ span \whp{}.

  \item An $\BigO{m\alpha^{k-2}}$ amortized expected work parallel algorithm for
    computing a $1/k$-approximation to the \kcds
    problem, and an $\BigO{m\alpha^{k-2}}$ work and polylogarithmic-span \whp{}
    algorithm for computing a $1/(k(1+\epsilon))$-approximation.

\item Optimized implementations of our algorithms that achieve
  significant speedups over existing state-of-the-art methods, and
  scale to the largest publicly-available graphs.

\end{enumerate}

Our code is publicly available at: \url{https://github.com/ParAlg/gbbs/tree/master/benchmarks/CliqueCounting}.

\section{Preliminaries}\label{sec-par-prim}
\myparagraph{Graph Notation} We consider graphs $G =(V,E)$ to be
simple and undirected, and let $n=|V|$ and $m=|E|$. 
For any vertex $v$, $N(v)$ denotes the
neighborhood of $v$ and $\text{deg}(v)$ denotes the degree of $v$.
If there are multiple graphs, $N_G(v)$ denotes the neighborhood
of $v$ in $G$.  For a directed graph $DG$, $N(v) = N_{DG}(v)$
denotes the out-neighborhood of $v$ in $DG$.  For analysis, we assume that $m=\BigOmega{n}$.
The \defn{arboricity ($\bm{\alpha}$)} of a graph is the minimum number
of spanning forests needed to cover the graph.  $\alpha$ is
upper bounded by $\BigO{\sqrt{m}}$ and lower bounded by
$\BigOmega{1}$~\cite{ChNi85}.

A \defn{\kc} is a subgraph $G'\subseteq G$ of size $k$ where all $\binom{k}{2}$ edges are present. The \defn{\kcds} is a subgraph $G'\subseteq G$ that maximizes across all subgraphs the ratio between the number of \kc{s} induced by vertices in $G'$ and the number of vertices in $G'$~\cite{Tsourakakis15}. 
An \defn{$c$-orientation} of an undirected graph is a
total ordering on the vertices, where the oriented out-degree of each
vertex (the number of its neighbors higher than it in the ordering) is
bounded by $c$.

\myparagraph{Model of Computation} For analysis, we use the work-span
model~\cite{JaJa92,CLRS}.  The \defn{work} $W$ of an algorithm is
the total number of operations, and the \defn{span} $S$ is the longest
dependency path.  
We can execute a parallel computation in $W/P+S$ running time using
$P$ processors~\cite{Brent1974}.
We aim for \defn{work-efficient} parallel algorithms in this model,
that is, an algorithm with work complexity that asymptotically matches
the best-known sequential time complexity for the problem.  We assume
concurrent reads and writes and atomic adds are supported in the model
in $\BigO{1}$ work and span.

\myparagraph{Parallel Primitives} We use the following primitives.
\defn{Reduce-Add} takes as input a sequence $A$ of length
$n$, and returns the sum of the
entries in $A$.
\defn{Prefix sum} takes as input a sequence $A$
of length $n$, an identity $\varepsilon$, and an associative binary
operator $\oplus$, and returns the sequence $B$ of length $n$ where
$B[i] = \bigoplus_{j < i} A[j] \oplus \varepsilon$.  \defn{Filter}
takes as input a sequence $A$ of length $n$ and a predicate function
$f$, and returns the sequence $B$ containing $a \in A$ such that
$f(a)$ is true, in the same order that these entries appeared in
$A$. These primitives take $\BigO{n}$ work and $\BigO{\log n}$
span~\cite{JaJa92}.

We also use \defn{parallel integer sort}, which sorts $n$ integers
in the range $[1, n]$ in $\BigO{n}$ work \whp{} and
$\BigO{\log n}$ span \whp{}~\cite{RaRe89}.
We use \defn{parallel hash tables} that support $n$
operations (insertions, deletions, and membership queries) in $O(n)$
work and $O(\log n)$ span \whp{}~\cite{gil91a}.  Given
hash tables $\mathcal{T}_1$ and $\mathcal{T}_2$ containing $n$ and
$m$ elements respectively, the intersection $\mathcal{T}_1 \cap
\mathcal{T}_2$ can be computed in $O(\min(n,m))$ work and
$O(\log (n+m))$ span \whp{}.

\myparagraph{Parallel Bucketing} A \defn{parallel bucketing structure}
maintains a mapping from keys to buckets, which we use to group
vertices by their \kc counts in our \kcds algorithms. The
bucket value of keys can change, and the structure updates the
bucket containing these keys.

In practice, we use the bucketing structure by Dhulipala et
al.~\cite{DhBlSh17}. However, for theoretical purposes, we use the
batch-parallel Fibonacci heap by Shi 
and Shun~\cite{Shi2020}, which  
supports $b$ insertions in $O(b)$ amortized expected
work and $O(\log n)$ span \whp{}, $b$ updates in
$O(b)$ amortized work and $O(\log^{2} n)$ span \whp{}, and extracts
the minimum bucket in $O(\log n)$ amortized expected work and $O(\log
n)$ span \whp{}.

\myparagraph{Graph Storage} In our implementations, we store our
graphs in compressed sparse row (CSR) format, which requires
$\BigO{m+n}$ space.  For large graphs, we compress the edges for
each vertex using byte codes that can be decoded in
parallel~\cite{SDB2015}.
For our theoretical bounds, we  assume that graphs are represented
in an adjacency hash table, where each vertex is associated with a
parallel hash table of its neighbors.

\section{Clique Counting}\label{sec:counting}
In this section, we present our main algorithms for counting \kc{}s. 
We describe our parallel algorithm for low out-degree
orientations in Section~\ref{sec:counting:ranking}, our parallel \kc counting
algorithm in Section~\ref{sec:counting:listing}, and practical optimizations in Section~\ref{sec:counting:optimize}.
We discuss briefly our parallel approximate counting algorithm in Section~\ref{sec:counting:sampling}.

\subsection{Low Out-degree Orientation (Ranking)}\label{sec:counting:ranking}

Recall that an $c$-orientation of an undirected graph is a total
ordering on the vertices, where the oriented out-degree of each vertex
(the number of its neighbors higher than it in the ordering) is
bounded by $c$. Although this problem has been widely studied in other
contexts, to the best of our knowledge, we are not aware of any
previous work-efficient parallel algorithms for solving this problem. We show
that the Barenboim-Elkin and Goodrich-Pszona algorithms, which are
efficient in the $\mathsf{CONGEST}$ and I/O models of computation
respectively, lead to work-efficient low-span algorithms.

Both algorithms take as input a
user-defined parameter $\epsilon$. The Barenboim-Elkin algorithm also requires a
parameter, $\alpha$, which is the arboricity of the graph
(or an estimate of the arboricity). As an estimate of the arboricity, we use 
the approximate densest-subgraph algorithm from~\cite{DhBlSh18}, 
which yields a $(2+\epsilon)$-approximation and takes $O(m + n)$ work and 
$O(\log^2 n)$ span.
The algorithms peel vertices in rounds until the graph
is empty; the peeled vertices are appended to the end of
ordering.  Both algorithms peel a constant fraction
of the vertices per round. For the Goodrich-Pszona algorithm, an $\epsilon/(2+\epsilon)$ fraction of
vertices are removed on each round, 
so the algorithm finishes in $O(\log n)$ rounds.  The Barenboim-Elkin algorithm peels
vertices with induced degree less than $(2+\epsilon)\alpha$ on each
round.
By
definition of arboricity, there are at most $n\alpha/d$ vertices with
degree at least $d$.  Thus, the number of vertices with degree at
least $(2+\epsilon)\alpha$ is at most $n/(2+\epsilon)$, and a
constant fraction of the vertices have degree at most
$(2+\epsilon)\alpha$.  Since a subgraph of a graph with arboricity
$\alpha$ has arboricity at most $\alpha$, each round peels at
least a constant fraction of remaining vertices, and the algorithm
terminates in $O(\log n)$ rounds.
We provide pseudocode for the algorithms in the \intheapp{}.
 
For the $c$-orientation given by the Barenboim-Elkin algorithm,
vertices have out-degree less than $(2+\epsilon)\alpha$ by construction.
For the $c$-orientation given by the Goodrich-Pszona algorithm, the
number of vertices with degree at least $(2+\epsilon)\alpha$ is at
most $n/(2+\epsilon)$, so the $\epsilon/(2+\epsilon)$ fraction
of the lowest degree vertices must have degree less than
$(2+\epsilon)\alpha$.

We implement each round of the Goodrich-Pszona
algorithm using parallel integer sorting to find the
$\epsilon/(2+\epsilon)$ fraction of vertices with lowest induced
degree.  Our parallelization of Barenboim-Elkin uses a
parallel filter to find the set of vertices to peel. We can
implement a round in both algorithms in linear work in the number of
remaining vertices, and $\BigO{\log n}$ span.  We obtain the following
theorem, which we prove in the \intheapp{}.

\begin{restatable}[]{theorem}{goodrichefficient}
\label{thm:goodrich_efficient}
  The Goodrich-Pszona and Barenboim-Elkin algorithms compute
  $\BigO{\alpha}$-orientations  in $\BigO{m}$ work (\whp{} for Goodrich-Pszona),
  $\BigO{\log^2 n}$ span (\whp{} for Goodrich-Pszona), and $\BigO{m}$ space. 
\end{restatable}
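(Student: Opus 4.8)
The plan is to establish the four assertions — the $\BigO{\alpha}$ out-degree bound, the $\BigO{m}$ work, the $\BigO{\log^2 n}$ span, and the $\BigO{m}$ space — for both algorithms, relying on the round-count and constant-fraction-peeling facts already derived above.

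First I would settle the $\BigO{\alpha}$ out-degree bound. Order the vertices so that those removed in earlier rounds come before those removed later (consistent with how the ordering is built by the peeling process). Then every out-neighbor of a vertex $v$ — a neighbor ranked higher than $v$ — is still present at the start of the round in which $v$ is removed, so $v$'s out-degree is at most its induced degree at that point. Both algorithms only remove vertices whose induced degree is below $(2+\epsilon)\alpha$: Barenboim-Elkin by its explicit threshold, and Goodrich-Pszona because the lowest-degree $\epsilon/(2+\epsilon)$ fraction lies below this threshold, as argued above. Hence every vertex has out-degree less than $(2+\epsilon)\alpha = \BigO{\alpha}$, giving an $\BigO{\alpha}$-orientation.

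Next I would bound the span by combining the $\BigO{\log n}$-round bound with an $\BigO{\log n}$ per-round span. In each Goodrich-Pszona round, a parallel integer sort on the induced degrees (integers bounded by $n$) selects the lowest-degree fraction in $\BigO{\log n}$ span \whp{}; in each Barenboim-Elkin round, a parallel filter against the fixed threshold does so in $\BigO{\log n}$ span. In both, decrementing the induced degrees of the neighbors of removed vertices via atomic adds and hash-table deletions takes $\BigO{\log n}$ span \whp{}. Multiplying by the $\BigO{\log n}$ rounds gives $\BigO{\log^2 n}$ span, to which I add the $\BigO{\log^2 n}$-span arboricity estimate of~\cite{DhBlSh18} used to set the Barenboim-Elkin threshold.

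The main obstacle is the $\BigO{m}$ work bound, since naively charging $\BigO{m}$ per round over $\BigO{\log n}$ rounds would give only $\BigO{m\log n}$; the crux is therefore an amortized argument. I would split a round's work into (a) the sort or filter and associated bookkeeping over the $n_i$ remaining vertices, which is $\BigO{n_i}$, and (b) the degree updates, whose cost is proportional to the edges incident to the vertices removed in that round. Since each round removes a constant fraction of the remaining vertices (established above), $n_i$ decays geometrically, so $\sum_i n_i = \BigO{n}$ and part (a) totals $\BigO{n}$. Each edge is touched a constant number of times over the whole execution — charged when its first endpoint is removed — so part (b) totals $\BigO{m}$. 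Adding the $\BigO{m}$ initial degree computation (and the $\BigO{m}$ arboricity estimate for Barenboim-Elkin) yields $\BigO{m}$ work, which holds \whp{} for Goodrich-Pszona because of the randomized integer sort. Finally, the space bound is immediate: the adjacency hash tables use $\BigO{m+n}$ space and the degree and ordering arrays use $\BigO{n}$, which is $\BigO{m}$ since $m=\BigOmega{n}$.
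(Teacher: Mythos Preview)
Your proposal is correct and follows essentially the same approach as the paper: both argue the out-degree bound from the peeling threshold, use integer sort (Goodrich-Pszona) or filter (Barenboim-Elkin) for per-round vertex selection, sum the geometrically decaying $n_i$ to get $\BigO{n}$ for that part, charge the degree updates to edges for $\BigO{m}$ total, and multiply $\BigO{\log n}$ rounds by $\BigO{\log n}$ per-round span. The only cosmetic difference is that the paper charges each edge once in each direction (when each endpoint is peeled) rather than to the first-removed endpoint, but this is the same $\BigO{m}$ bound.
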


Finally, in the rest of this paper, we direct graphs in
CSR format after computing an orientation, which can be done in
$O(m)$ work and $O(\log n)$ span using prefix sum and filter.

\subsection{Counting algorithm}\label{sec:counting:listing}

\begin{algorithm}[!t]
  \footnotesize
 \begin{algorithmic}[1]
 \Procedure{Rec-Count-Cliques}{$DG$, $I$, $\ell$}
 \State \Comment{$I$ is the set of potential neighbors to complete the clique, and $\ell$ is the recursive level}
  \If{$\ell = 1$}
   \Return $|I|$ \Comment{Base case} \label{line:base-case}
 \EndIf
 \State Initialize $T$ to store clique counts per vertex in $I$
\ParFor{$v$ in $I$}
  \State $I' \leftarrow$ \algname{intersect}($I$, $N_{DG}(v)$) \Comment{Intersect $I$ with directed neighbors of $v$} \label{line:intersect}
  \State $t' \leftarrow$ \algname{rec-count-cliques}($DG$, $I'$, $\ell-1$) \label{line:rec-count-recurse}
  \State Store $t'$ in $T$
\EndParFor
\State $t \leftarrow$ \algname{reduce-add}($T$) \Comment{Sum clique counts in $T$} \label{line:aggregate}
\State \Return $t$ \label{line:return}
\EndProcedure
\smallskip
\Procedure {\ourcount{}}{$G = (V,E)$, $k$, \algname{Orient}}
\State $DG \leftarrow$ \algname{Orient}($G$) \Comment{Apply a user-specified orientation algorithm} \label{line:orient}
\State \Return \algname{Rec-Count-Cliques}($DG$, $V$, $k$) \label{line:rec-count}
  \EndProcedure
 \end{algorithmic}
\caption{Parallel $k$-clique counting algorithm} \label{alg-count}
\end{algorithm}

 Our algorithm for $k$-clique counting is shown as \ourcount{} in
 \algprefix \ref{alg-count}.  On Line~\ref{line:orient}, \ourcount{} first directs the edges of
 $G$ such that every vertex has out-degree $O(\alpha)$, as described
 in Section \ref{sec:counting:ranking}. Then,
 it calls a recursive subroutine \algname{rec-count-cliques} that
 takes as input the directed graph $DG$, candidate vertices $I$ that
 can be added to a clique, and the number of vertices $\ell$ left to
 complete a $k$-clique (Line~\ref{line:rec-count}). With every
 recursive call to \algname{rec-count-cliques}, a new candidate vertex
 $v$ from $I$ is added to the clique and $I$ is pruned to contain only
 out-neighbors of $v$
 (Line~\ref{line:intersect}). \algname{rec-count-cliques} terminates
 when precisely one vertex is needed to complete the $k$-clique, in
 which the number of vertices in $I$ represents the number of
 completed $k$-cliques (Line~\ref{line:base-case}). The counts
 obtained from recursive calls are aggregated using a
 \algname{reduce-add} and returned
 (Lines~\ref{line:aggregate}--\ref{line:return}).

 Finally, by construction, \ourcount{} and \algname{rec-count-cliques}
can be easily modified to store $k$-clique counts per vertex.
We append \algname{-v} to indicate the corresponding subroutines 
that store counts per vertex, which are used in our peeling algorithms. 
Similarly, \ourcount{} can be modified to support \kc listing.

\myparagraph{Complexity Bounds} Aside from the initial call
to \algname{rec-count-cliques} which takes $I = V$, in
subsequent calls, the size of $I$ is bounded by $O(\alpha)$. This is
because at every recursive step, $I$ is intersected with the
out-neighbors of some vertex $v$, which is bounded by $O(\alpha)$.
The additional space required by \ourcount{} per processor is
$\BigO{\alpha}$, and since the space is allocated in a stack-allocated
fashion, we can bound the total additional space by $\BigO{P\alpha}$
on $P$ processors when using a work-stealing
scheduler~\cite{BL98}. Thus, the total space for \ourcount{} is
$\BigO{m+P\alpha}$. In contrast, the \kclist algorithm requires
$\BigO{m+P\alpha^2}$ space.

Moreover, considering the first call to \algname{rec-count-cliques},
the total work of \algname{intersect} is given by $\BigO{m}$ \whp{},
because the sum of the degrees of each vertex is bounded by
$\BigO{m}$.  Also, using a parallel adjacency hash table, the
 work of \algname{intersect} in each subsequent recursive step
is given by the minimum of $|I|$ and $|N_{DG}(v)|$, and thus is
bounded by $\BigO{\alpha}$ \whp{}. We recursively call
\algname{rec-count-cliques} $k$ times as $\ell$ ranges from $1$ to
$k$, but the first call involves a trivial intersect where
we retrieve all directed neighbors of $v$, and the final recursive call returns 
immediately with $|I|$. Hence, we have $k-2$ recursive steps that call
\algname{intersect} non-trivially, and so in total,
\ourcount{} takes $\BigO{m\alpha^{k-2}}$ work \whp{}.

The span of \ourcount{} is defined by the span of
\algname{intersect} and \algname{reduce-add} in each recursive
call. As discussed in Section~\ref{sec-par-prim}, the span of
\algname{intersect} is $\BigO{\log n}$ \whp{}, due to the use of the 
parallel hash tables, and the span
of \algname{reduce-add} is $\BigO{\log n}$. Thus, since we
have $k-2$ recursive steps with $\BigO{\log n}$ span, and taking into account 
the $\BigO{\log^2 n}$ span \whp{} in orienting the graph, \ourcount{} takes
$\BigO{k \log n + \log^2 n}$ span \whp{}.
\ourcountv{} obtains the same work and span bounds
as \ourcount{}, since the atomic add operations do
not increase the work or span.
The total complexity of $k$-clique counting is as follows.
\begin{theorem}\label{thm:clique-list}
\ourcount takes $\BigO{m\alpha^{k-2}}$ work and $\BigO{k \log n + \log^2 n}$ span \whp{}, using $\BigO{m+P\alpha}$ space on $P$ processors.
\end{theorem}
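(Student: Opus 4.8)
The plan is to prove Theorem~\ref{thm:clique-list} by consolidating the three separate accounting arguments---work, span, and space---that the surrounding text has already sketched for \ourcount{}, and verifying that each survives the recursion cleanly. The overall strategy is to bound the cost of a single invocation of \algname{Rec-Count-Cliques} at recursive level $\ell$, then sum (for work) or take the maximum along a root-to-leaf path (for span) across the $k$ levels of recursion, and finally fold in the $\BigO{m}$-work, $\BigO{\log^2 n}$-span orientation step of Theorem~\ref{thm:goodrich_efficient}.

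For the work bound, I would first establish the key structural invariant: in every call to \algname{Rec-Count-Cliques} below the top level, the candidate set $I$ has size $\BigO{\alpha}$, because $I$ was produced by \algname{intersect}ing against some vertex's out-neighborhood, which the orientation guarantees is $\BigO{\alpha}$. With this in hand, I would argue that the top-level call spreads work across all vertices of $V$, contributing $\sum_{v} |N_{DG}(v)| = \BigO{m}$ total, while each of the remaining $k-2$ non-trivial levels of \algname{intersect} multiplies the candidate-set work by a factor of $\BigO{\alpha}$ (using that adjacency-hash-table intersection costs $\BigO{\min(|I|,|N_{DG}(v)|)} = \BigO{\alpha}$ \whp{}). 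The final $\ell=1$ base case and the very first intersect are trivial, so exactly $k-2$ levels amplify, yielding $\BigO{m\alpha^{k-2}}$ work \whp{}; the \algname{reduce-add} aggregation is dominated by this and does not change the bound.

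For the span, the argument is that each recursive level adds only a constant number of parallel primitives in series: one \algname{intersect} with $\BigO{\log n}$ span \whp{} and one \algname{reduce-add} with $\BigO{\log n}$ span. Since the recursion has depth $k$, the dependency path through \algname{Rec-Count-Cliques} is $\BigO{k\log n}$ \whp{}, and adding the $\BigO{\log^2 n}$ span \whp{} of the orientation step (Theorem~\ref{thm:goodrich_efficient}) gives the claimed $\BigO{k\log n + \log^2 n}$. For the space bound I would invoke the depth-$\BigO{\alpha}$ stack-allocation argument already given in the text---each processor holds at most $\BigO{\alpha}$ extra state, so a work-stealing scheduler bounds the extra space by $\BigO{P\alpha}$---and add the $\BigO{m}$ for storing $DG$, giving $\BigO{m + P\alpha}$.

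The main obstacle, and the step deserving the most care, is making the work recurrence rigorous rather than heuristic: the informal ``multiply by $\BigO{\alpha}$ per level'' must be justified as a genuine recursion $W_\ell(|I|) \le \sum_{v \in I} (\BigO{\alpha} + W_{\ell-1}(\BigO{\alpha}))$ whose solution is $\BigO{|I| \cdot \alpha^{\ell-2}}$ for the subproblems, combined with the observation that only at the root does $|I|$ reach $n$ and contribute $\BigO{m}$ rather than $\BigO{n\alpha}$. I would also need to confirm the \whp{} bounds compose correctly: there are $\BigO{m\alpha^{k-2}}$ hash-table operations in total, so a single union-bound over all of them (at the cost of adjusting the constant $c$ in the \whp{} definition) makes all the intersection-cost and span guarantees hold simultaneously with high probability, which is what lets us state the final work and span bounds as \whp{} rather than merely in expectation.
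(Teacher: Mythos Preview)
Your proposal is correct and follows essentially the same argument as the paper's own ``Complexity Bounds'' paragraph: the $\BigO{\alpha}$ size invariant on $I$ below the top level, the $\BigO{m}$ cost at the first level with $k-2$ nontrivial $\BigO{\alpha}$-amplifying levels for work, the per-level $\BigO{\log n}$ span times depth $k$ plus the $\BigO{\log^2 n}$ orientation span, and the $\BigO{P\alpha}$ stack-allocation space bound under a work-stealing scheduler. Your additional care about the explicit recurrence and the union bound over all hash-table operations is sound and makes the \whp{} composition more explicit than the paper does.
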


\subsection{Sampling}\label{sec:counting:sampling}
We discuss in the \intheapp{} a technique, colorful sparsification,
that allows us to produce approximate \kc
counts, based on previous work on approximate triangle and butterfly
(biclique) counting~\cite{Pagh2012, SaSaTi18}. The technique uses our
\kc counting algorithm (\algprefix~\ref{alg-count}) as a subroutine, and we prove the
following theorem in the \intheapp{}.

\begin{restatable}[]{theorem}{sampworkspan}
\label{thm:sampworkspan}
Our sampling algorithm with parameter $p=1/c$ gives an unbiased estimate of the global \kc count and takes $\BigO{pm\alpha^{k-2}+m}$ work and $\BigO{k \log n + \log^2 n}$ span \whp{}, and $\BigO{m+P\alpha}$ space on $P$ processors.
\end{restatable}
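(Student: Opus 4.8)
The plan is to analyze the colorful sparsification scheme as follows. We assign to each vertex an independent, uniformly random color from a palette of $c = 1/p$ colors, and define the sparsified graph $G'$ to be the subgraph of $G$ retaining exactly the \emph{monochromatic} edges, i.e., those whose two endpoints received the same color. We then run \ourcount{} (\algprefix~\ref{alg-count}) on $G'$ to obtain the count $X$ of \kc{}s in $G'$, and return the rescaled quantity $\hat{C} = X / p^{k-1}$ as the estimate of the global \kc count $C$.

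For unbiasedness, the key observation is that a fixed set of $k$ vertices forming a \kc in $G$ survives in $G'$ (that is, all $\binom{k}{2}$ of its edges are monochromatic) if and only if all $k$ of its vertices receive the same color. Since the colors are independent and uniform, this event has probability $c \cdot (1/c)^k = p^{k-1}$. Writing $X = \sum_{Q} \mathbf{1}[Q \text{ monochromatic}]$ as a sum of indicators over the $C$ cliques $Q$ of $G$, linearity of expectation gives $\mathbb{E}[X] = p^{k-1} C$, so $\mathbb{E}[\hat{C}] = C$ and the estimate is unbiased; notably, no independence between distinct cliques is required.

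For the complexity I would argue in three pieces. (i) Constructing $G'$ costs $\BigO{m}$ work and $\BigO{\log n}$ span: coloring is a constant-work-per-vertex map, and extracting the monochromatic edges is a single \textbf{filter}. (ii) $G'$ is a subgraph of $G$, so its arboricity is at most $\alpha$; hence an orientation algorithm from Theorem~\ref{thm:goodrich_efficient} yields an $\BigO{\alpha}$-orientation of $G'$ in $\BigO{m}$ work and $\BigO{\log^2 n}$ span \whp{}. (iii) Applying Theorem~\ref{thm:clique-list} to $G'$, which has $m' = |E(G')|$ edges and arboricity at most $\alpha$, the counting step costs $\BigO{m' \alpha^{k-2}}$ work and $\BigO{k\log n + \log^2 n}$ span \whp{}, and the final rescaling is $\BigO{1}$. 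Since $\mathbb{E}[m'] = pm$, the expected total work is $\BigO{pm\alpha^{k-2} + m}$, the span is $\BigO{k\log n + \log^2 n}$ \whp{}, and the space matches that of \ourcount{}, namely $\BigO{m + P\alpha}$, because $G'$ occupies $\BigO{m}$ space and the recursion adds $\BigO{P\alpha}$ on $P$ processors.

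The main obstacle is upgrading the work bound from \emph{expected} to \emph{with high probability}, which requires concentration of $m' = |E(G')|$ around its mean $pm$. The difficulty is that the edge-retention indicators are not mutually independent: two edges sharing a common vertex are correlated through that vertex's color. I would address this with a bounded-differences (McDiarmid) argument over the $n$ independent vertex-color choices, noting that resampling one vertex's color changes $m'$ by at most its degree; alternatively, one can observe that $m' \le m$ holds deterministically so the additive $\BigO{m}$ term already dominates in the dense-sampling regime, and invoke a Chernoff-type bound when $pm = \BigOmega{\log n}$ to control the lower-order deviation. Finally, care is needed to combine this source of randomness with the randomness internal to \ourcount{} (from the hash-table and sorting primitives), so that the $\BigO{m'\alpha^{k-2}}$ work bound and the span bound hold simultaneously \whp{}; a union bound over the relevant failure events suffices.
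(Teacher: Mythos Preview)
Your proposal is correct and follows essentially the same route as the paper: unbiasedness via the $p^{k-1}$ survival probability of a monochromatic clique and linearity of expectation, then $\BigO{m}$ work to color and filter, the observation that $G'$ inherits arboricity at most $\alpha$, and an application of Theorem~\ref{thm:clique-list} to $G'$ with $\mathbb{E}[m'] = pm$. Your discussion of concentrating $m'$ around $pm$ actually goes beyond what the paper does---the paper's proof simply substitutes the expectation $pm$ for $m'$ and does not argue any concentration, so the work bound there is implicitly only in expectation (the ``\whp{}'' qualifier in the statement is carried by the hash-table and span analysis, not by a tail bound on $m'$).
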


\subsection{Practical Optimizations}\label{sec:counting:optimize}
We now introduce practical optimizations that offer tradeoffs
between performance and space complexity. 
First, in the initial call to \algname{rec-count-cliques}, for each
$v$, we construct the induced subgraph on $N_{DG}(v)$
and replace $DG$ with this subgraph in later recursive
levels. Thus, later recursive levels can skip
edges that have already been pruned in the
first level. Because the out-degree of each
vertex is bounded above by $\BigO{\alpha}$, we require $\BigO{\alpha^2}$
extra space per processor to store these induced subgraphs.

Moreover, as mentioned in Section~\ref{sec-par-prim}, we store our
graphs (and induced subgraphs) in CSR format. To efficiently intersect
the candidate vertices in $I$ with the requisite out-neighbors, we
relabel vertices in the induced subgraph constructed in the second
level of recursion to be in the range $[0,\ldots,\BigO{\alpha}]$, and
then use an array of size $\BigO{\alpha}$ to mark vertices in $I$.
For each vertex $I$, we check if its out-neighbors are
marked in our array to perform \algname{intersect}.

While this would require $\BigO{k\alpha}$ extra space per processor to
maintain a size $\BigO{\alpha}$ array per recursive call, we find
that in practice, parallelizing up to the first two recursive
levels is sufficient. Subsequent recursive
calls are sequential, so we can reuse the array
between recursive calls by using the labeling scheme from Chiba and
Nishizeki's serial $k$-clique counting algorithm~\cite{ChNi85}. We
record the recursive level $\ell$ in our array for each vertex in $I$,
perform \algname{intersect} by checking if the out-neighbors have been
marked with $\ell$ in the array, and then reset the marks. 
This allows us to use only
$\BigO{\alpha}$ extra space per processor to perform
\algname{intersect} operations.

In our implementation, \defn{node parallelism} refers to parallelizing
only the first recursive level and \defn{edge parallelism} refers
to parallelizing only the first two recursive levels. These
correspond with the ideas of node and edge parallelism in Danisch et
al.'s \kclist{} algorithm~\cite{Danisch18}.  We also implemented dynamic
parallelism, where more recursive levels are parallelized, but this was slower in
practice---further parallelization did not mitigate the parallel overhead
introduced.

Finally, for the intersections on the second recursive level
 (the first set of non-trivial intersections), it is faster
in practice to use an array marking vertices in $N_{DG}(v)$. 
If we let $I_1 = N_{DG}(v)$ denote the set of 
neighbors obtained after the first recursive level, then to obtain 
the vertices in $I_2$ in the second level, we use 
a size $n$ array to mark vertices in $I_1$ and 
perform a constant-time lookup to determine for $u \in I_1$, which
out-neighbors $u' \in N_{DG}(u)$ are also in $I_1$; these $u'$ form $I_2$. 
Past the second level, 
we relabel vertices in the induced subgraph as mentioned
above and only require the $\BigO{\alpha}$ array for
 intersections. Thus, we use linear space per processor
 for the second level of recursion only.

In total, the space complexity for intersecting in the second level of
recursion and storing the induced subgraph on $N_{DG}(v)$ dominates,
and so we use $\BigO{\max (n, \alpha^2)}$ extra space per processor.

\subsection{Comparison to \kclist{}}\label{sec:counting:kclist}

\begin{figure}
  \centering
\includegraphics[width=0.65\columnwidth, page=8]{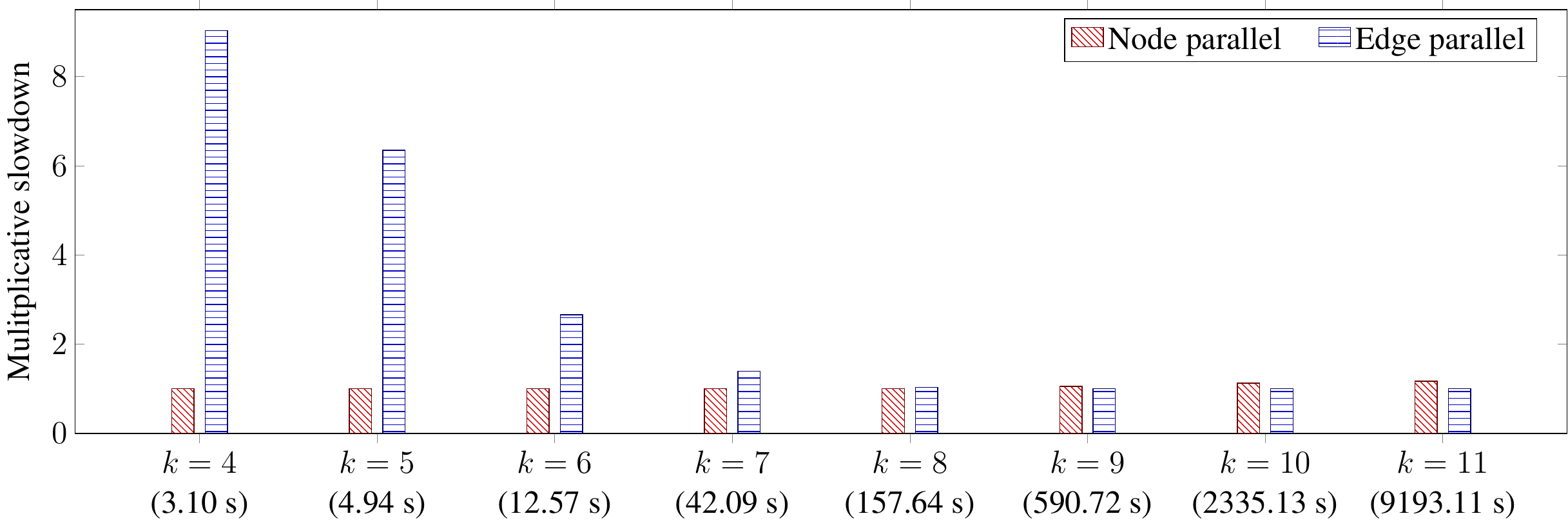}
  \caption{Multiplicative slowdowns of \kclist{}'s parallel \kc counting implementation, compared to \ourcount{}. The best runtimes between node and edge parallelism for  \kclist{} and \ourcount{}, and among different orientations for \ourcount{} are used.}\label{fig:kclist} 
\end{figure}

Some of the practical optimizations for 
\ourcount{} overlap with those in
\kclist{}~\cite{Danisch18}. 
Specifically, \kclist{} also stores the induced subgraph
on $N_{DG}(v)$, offers node and edge parallelism options, and uses a size $n$ array to mark vertices to perform
intersections. However, \ourcount{} is fundamentally different due to the low
out-degree orientation and because it does not inherently
require labels or subgraphs stored between recursive levels.

Notably, the induced subgraph that \ourcount{} computes at
the first level of recursion takes $\BigO{\alpha^2}$ space per processor because
of the low out-degree orientation, whereas \kclist{} takes
$\BigO{n^2}$ space per processor for their induced subgraph. Then, 
\ourcount{} further saves on space and computation by maintaining
only the subgraph computed from the first level of recursion to 
intersect with vertices in later recursive levels, which is solely possible
due to the low out-degree orientation, whereas \kclist{} necessarily
recomputes an induced subgraph on every recursive level.
As a result, \ourcount{} is also able to compute intersections 
using only an array of size $\BigO{\alpha}$ per recursive level, whereas \kclist{}
requires an array of size $\BigO{n}$ per level.

In total, \kclist{} uses $\BigO{n^2}$ extra space per processor,
whereas \ourcount{} uses $\BigO{\max (n, \alpha^2)}$ extra space per processor. 
Compared to \kclist, \ourcount has lower memory
footprint, span, and constant factors in the work, which allow us to
achieve speedups between 1.31--9.88x over \kclist{}'s best parallel
runtimes and which allows us to scale to the largest
publicly-available graphs, considering the best optimizations, 
as shown in Figure~\ref{fig:kclist}.
Note that for large $k$ on large graphs, the multiplicative slowdown decreases because 
\kclist{} incurs a large preprocessing overhead due to the 
large induced subgraph computed in the first recursive level, 
which is mitigated by higher 
counting times as $k$ increases. These results
are discussed further in Section~\ref{sec:counting-eval}.

\section{$k$-Clique Densest Subgraph}\label{sec:densest-subgraph}

We present our new work-efficient parallel algorithms for approximating the \kcds
problem, using the vertex peeling algorithm.

\subsection{Vertex Peeling}\label{sec:peeling:peeling}

~

\myparagraph{Algorithm} \algprefix~\ref{alg:peelexact} presents
\ourpeel, our parallel algorithm for vertex peeling, which also gives
a $1/k$-approximate to the \kcds problem.  An example of this peeling process is shown in Figure~\ref{fig:peeling-example}.
The algorithm uses
\ourcount to compute the initial per-vertex \kc counts
($C$), which are given as an argument to the algorithm.  The algorithm
first initializes a parallel bucketing structure that stores buckets
containing sets of vertices, where all vertices in the same bucket
have the same \kc count (Line~\ref{line:bucketinit}).  Then, while not
all of the vertices have been peeled, it repeatedly extracts the
vertices with the
lowest induced \kc count (Line~\ref{line:extractbucket}), updates the
count of the number of peeled vertices
(Line~\ref{line:updatefinished}), and updates the \kc counts of
vertices that are not yet finished that participate in \kc{}s with the
peeled vertices (Line~\ref{line:updatecliques}). \algname{Update} also
returns the number of \kc{}s that were removed as well as the set of vertices whose \kc counts changed.
We then update the buckets of the vertices whose \kc counts changed (Line~\ref{line:updatebuckets}).
Lastly, the algorithm
checks if the new induced subgraph has higher density than the current
maximum density, and if so updates the maximum density
(Lines~\ref{line:updatemaxdensity-if}--\ref{line:updatemaxdensity}).

The \algname{Update} procedure
(Line~\ref{line:update}--\ref{line:update-end}) performs the bulk of
the work in the algorithm. It takes each vertex in $A$ (vertices to be
peeled), builds its induced neighborhood, and counts all
$(k-1)$-cliques in this neighborhood using \ourcount,
as these $(k-1)$-cliques together with a peeled vertex form a \kc
(Line~\ref{line:peel-count-cliques}). On
Line~\ref{line:double-counting}, we avoid double counting \kc{s} by
ignoring vertices already peeled in prior rounds, and for vertices
being peeled in the same round, we first mark them in an auxiliary
array and break ties based on their rank (i.e., for a \kc involving
multiple vertices being peeled, the highest ranked vertex is
responsible for counting it).

This algorithm computes a density that approximates the
density of the $k$-clique densest subgraph.  A subgraph with this
density can be returned by rerunning the algorithm.

\begin{algorithm}[!t]
  \footnotesize
 \begin{algorithmic}[1]
\Procedure{Update}{$G=(V,E), k, DG, C, A$} \label{line:update}
 \State Initialize $T$ to store \kc counts per vertex in $A$
\ParFor{$v$ in $A$}
  \State $I \leftarrow \{ u \mid u \in N_G(v) $ and $u$ has not been previously peeled or $u \in A$ and $u \in N_{DG}(v)$ $\}$ \Comment{To avoid double counting} \label{line:double-counting}
  \State $(t',U) \leftarrow$ \algname{rec-count-cliques-v}($DG$, $I$, $k-1$, $C$) \label{line:peel-count-cliques}
   \State Store $t'$ in $T$
\EndParFor
\State $t \leftarrow$ \algname{reduce-add}($T$) \Comment{Sum \kc counts in $T$}
\State \Return $(t,U)$ \label{line:update-end}
\EndProcedure

\smallskip

  \Procedure {\ourpeel{}}{$G=(V,E), k,DG,C,t$}
  \State \Comment $C$ is an array of \kc counts per vertex and $t$ is the total \# of \kc{}s
\State Let $B$ be a bucketing structure mapping $V$ to buckets based
on \# of \kc{}s\label{line:bucketinit}
\State $d^* \leftarrow t/|V|$, $f \leftarrow 0$
\While{$f < |V|$}
\State $A \leftarrow$ vertices in next bucket in $B$ (to be peeled)\label{line:extractbucket}
\State $f \leftarrow f + |A|$\label{line:updatefinished}
\State $(t',U) \leftarrow $\algname{Update}$(G, k, DG, C, A)$ \Comment{Update \# of \kc{}s}\label{line:updatecliques}
\State Update the buckets of vertices in $U$, peeling $A$ \label{line:updatebuckets}
\If{$t'/(|V|-f) > d^*$}\label{line:updatemaxdensity-if}
\State $d^* \leftarrow t'/(|V|-f)$ \Comment{Update maximum density}\label{line:updatemaxdensity}
\EndIf
\EndWhile
\State \Return $d^*$
  \EndProcedure
 \end{algorithmic}
\caption{Parallel vertex peeling algorithm}
 \label{alg:peelexact}
\end{algorithm}

\begin{figure}[t]
  \centering
  \includegraphics[width=0.75\columnwidth, page=2]{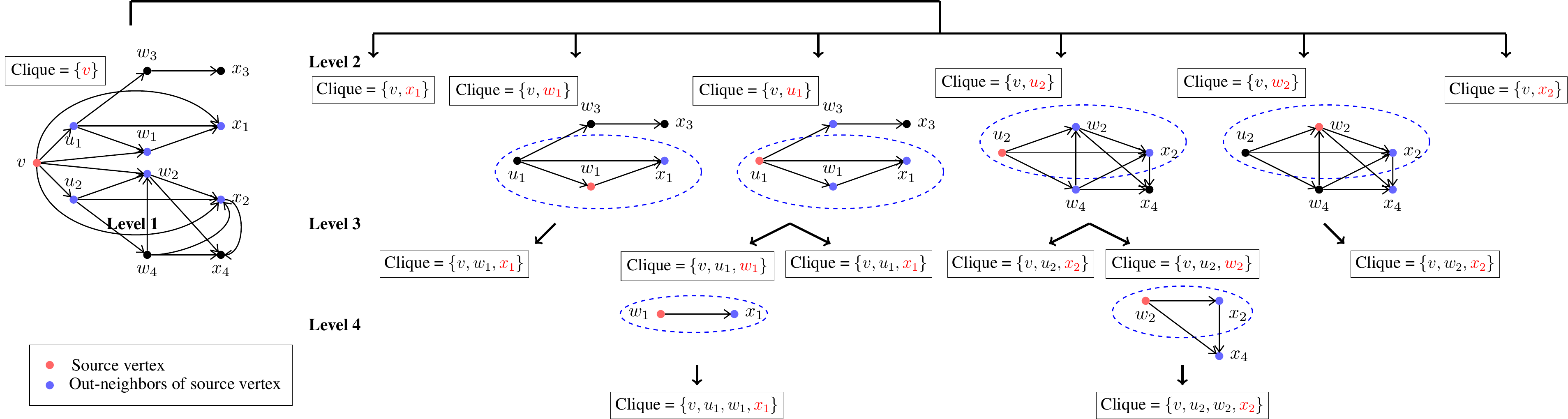}
\caption{An example of our peeling algorithm \ourpeel for $k=4$. Each vertex is labeled with its current 4-clique count. At each step, we peel the vertices with the minimum 4-clique count, highlighted in red, and then recompute the 4-clique counts on the unpeeled vertices. If there are multiple vertices with the same minimum 4-clique count, we peel them in parallel. Each step is labeled with the \kc density of the remaining graph. 
}
\label{fig:peeling-example}
\end{figure}

In the \intheapp{}, we prove that \ourpeel correctly generates a subgraph with the same approximation guarantees of 
Tsourakakis' sequential
$k$-clique densest subgraph algorithm~\cite{Tsourakakis15},
and the following bounds on the complexity of \ourpeel.
$\rho_k(G)$ is defined to be the \kc peeling complexity of
$G$, or the number of rounds needed to peel the graph where in each
round, all vertices with the minimum \kc count are peeled.  Note that
$\rho_{k}(G) \leq n$.  The proof requires applying bounds from the
batch-parallel Fibonacci heap~\cite{Shi2020} and using the
Nash-Williams theorem~\cite{nash1961edge}.

\begin{restatable}[]{theorem}{peelexact}
\label{thm:peelexact}
  \ourpeel computes a $1/k$-approximation to the
  $k$-clique densest subgraph problem in $\BigO{m\alpha^{k-2} +
  \rho_k(G)\log n}$ expected amortized work, $\BigO{\rho_k(G) k
  \log n + \log^2 n}$ span \whp{}, and $\BigO{m + P \alpha}$ space, where
  $\rho_k(G)$ is the \kc peeling complexity of $G$.
\end{restatable}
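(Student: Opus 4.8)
The plan is to split the argument into correctness of the $1/k$-approximation and the three resource bounds. For correctness I would adapt Tsourakakis' charging argument to the batched (parallel) peeling order. Let $S^*$ be an optimal \kcds with density $\rho^* = c_k(S^*)/|S^*|$, where $c_k(\cdot)$ counts induced \kc{}s. First I would prove the optimality lemma: every vertex of $S^*$ lies in at least $\rho^*$ induced \kc{}s of $S^*$, since deleting a vertex with fewer would strictly raise the density, contradicting optimality. Then I would isolate the first round in which \ourpeel removes a vertex $v$ belonging to $S^*$. At the start of that round the current induced subgraph $H$ still contains all of $S^*$, so by monotonicity the \kc count of $v$ in $H$ is at least its count in $S^*$, hence at least $\rho^*$; and since $v$ sits in the extracted minimum bucket, every vertex of $H$ has \kc count at least $\rho^*$. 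Because $\sum_{w\in H}(\text{\kc count of }w) = k\,c_k(H)$, this gives $c_k(H)/|H| \ge \rho^*/k$. As \ourpeel records the \kc density of each nested induced subgraph it produces and returns the maximum (Lines~\ref{line:updatemaxdensity-if}--\ref{line:updatemaxdensity}), and each such density is at most $\rho^*$, the returned value lies in $[\rho^*/k,\rho^*]$. Batching a whole bucket at once is harmless: the argument needs only one round whose subgraph contains $S^*$ and has minimum count $\ge \rho^*$, which is exactly the round isolated above.

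For the work I would bound the counting and the bucketing separately. Each vertex $v$ is peeled exactly once, and when it is, \algname{Update} builds $I \subseteq N_G(v)$ and counts $(k-1)$-cliques there with \ourcountv. Restricting the fixed $\BigO{\alpha}$-orientation to $I$ keeps out-degrees $\BigO{\alpha}$, so $G[I]$ has $\BigO{|I|\alpha}$ directed edges and arboricity $\BigO{\alpha}$; by the analysis behind Theorem~\ref{thm:clique-list} this costs $\BigO{|I|\alpha\cdot\alpha^{k-3}} = \BigO{\deg_G(v)\,\alpha^{k-2}}$ \whp{}, and summing over all vertices gives $\sum_v \deg_G(v)\,\alpha^{k-2} = \BigO{m\alpha^{k-2}}$, matching a single full count. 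For bucketing, the $\rho_k(G)$ rounds each do one extract-min at $\BigO{\log n}$ amortized expected work, for $\BigO{\rho_k(G)\log n}$ total; the remaining operations are count-decrement updates, whose total number over the whole run is at most $k$ times the number of destroyed \kc{}s, i.e. $\BigO{k\,c_k(G)}$. Here I would invoke the Nash-Williams theorem (via the arboricity bound on every subgraph) to conclude $c_k(G) = \BigO{m\alpha^{k-2}}$, so by the $\BigO{b}$ amortized-work bound of the batch-parallel Fibonacci heap these updates cost $\BigO{m\alpha^{k-2}}$. Adding the initial \ourcountv call yields the claimed $\BigO{m\alpha^{k-2}+\rho_k(G)\log n}$ expected amortized work.

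For the span I would charge a polylogarithmic cost to each of the $\rho_k(G)$ rounds: the \ourcountv call counting $(k-1)$-cliques has span $\BigO{k\log n}$ \whp{} ($k-1$ levels, each one hash-table intersect plus a reduce-add), building each induced neighborhood is a $\BigO{\log n}$-span filter, the extract-min is $\BigO{\log n}$ \whp{}, and the batched bucket updates are $\BigO{\log^2 n}$ \whp{}; the dominant per-round term is the $\BigO{k\log n}$ counting span, giving $\BigO{\rho_k(G)\,k\log n}$ across all rounds, plus the one-time $\BigO{\log^2 n}$ span \whp{} for the orientation and initial counts (Theorems~\ref{thm:goodrich_efficient} and~\ref{thm:clique-list}). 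The space bound is inherited: the recursive counting uses $\BigO{\alpha}$ stack-allocated space per processor (hence $\BigO{P\alpha}$ under a work-stealing scheduler), the graph and bucketing structure take $\BigO{m}$, and the double-counting tie-break array takes $\BigO{n}=\BigO{m}$, for $\BigO{m+P\alpha}$ total. I expect the main obstacle to be the amortized work analysis: one must show that recounting \kc{}s in peeled neighborhoods across all $\rho_k(G)$ rounds telescopes to the cost of one count rather than blowing up by a factor of $\rho_k(G)$ — which rests on charging each round's counting to the unique peel of each vertex and on the tie-break guaranteeing every \kc is destroyed, and thus counted, exactly once — together with the Nash-Williams bound controlling the total volume of bucket updates. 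The span summation is comparatively routine, though care is needed to confirm that the per-round batched-heap span does not dominate the $\BigO{k\log n}$ counting term.
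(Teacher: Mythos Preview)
Your proposal is correct and follows the same overall decomposition as the paper (correctness, then work split into counting and bucketing, then span per round times $\rho_k(G)$, then space). The chief difference is in how you bound the counting work inside \algname{Update}. The paper invokes Nash--Williams to bound $|G[N(v)]| \le \alpha(d(v)-1)$ and then Chiba--Nishizeki's Lemma~2, $\sum_{(u,v)\in E}\min(d(u),d(v)) = \BigO{m\alpha}$, to control the first-level intersect costs across all peels. You instead observe directly that restricting the fixed $\BigO{\alpha}$-orientation to $I = N_G(v)$ keeps out-degrees bounded by $\alpha$, so $G[I]$ has $\BigO{|I|\alpha}$ directed edges, whence the $(k-1)$-clique count on it costs $\BigO{d(v)\alpha^{k-2}}$; summing over all peeled $v$ gives $\BigO{m\alpha^{k-2}}$. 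Your route is cleaner and avoids the explicit Nash--Williams/Chiba--Nishizeki detour for this step. For correctness, you give the self-contained charging argument directly, whereas the paper argues by showing the parallel peel order refines a valid sequential order and defers to Tsourakakis; both are sound, and yours is more informative.

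Two small remarks. First, your invocation of Nash--Williams to justify $c_k(G) = \BigO{m\alpha^{k-2}}$ is a bit imprecise (Nash--Williams bounds edge counts, not clique counts); the cleanest justification is simply that \ourcount lists every \kc in $\BigO{m\alpha^{k-2}}$ work, which the paper also uses implicitly. Second, you correctly flag at the end that the $\BigO{\log^2 n}$ per-round span of the batched Fibonacci-heap updates could, for constant $k$, contribute $\BigO{\rho_k(G)\log^2 n}$ and overshoot the stated $\BigO{\rho_k(G)k\log n + \log^2 n}$; the paper's proof does not address this either, so your caution is warranted rather than a defect of your argument.
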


\myparagraph{Discussion}
To the best of our knowledge, Tsourakakis presents the first sequential
algorithm for this problem, although the work bound is worse than ours in most cases.
Sariyuce et al.~\cite{SariyuceP18} present a sequential algorithm for a more
general problem, but in the case that is equivalent to \kc peeling,
their fastest algorithm runs in $O(R(G,k))$
work and $O(C(G,k))$ space, where $R(G,k)$ is the cost of an arbitrary
\kc counting algorithm and $C(G,k)$ is the number of $k$-cliques
in $G$.
They provide another algorithm which
runs in $O(m+n)$ space, but requires $O(\sum_{v} d(v)^{k})$ work,
which could be as high as $O(n^{k})$.
Our sequential bounds are asymptotically better
than theirs in terms of either work or space, except
in the highly degenerate case where $C(G,k) =
o(\rho\log n)$.
Sariyuce
et al.~\cite{SaSePi18} also give a parallel algorithm, which is similarly not work-efficient.

\subsection{Approximate Vertex Peeling}\label{sec:peeling:approx}
We present a $1/(k(1+\epsilon))$-approximate algorithm
\ourapproxpeel{} for the \kcds problem based on approximate
peeling. The algorithm is similar to 
\ourpeel, but in each round, it sets a threshold $t = k(1+\epsilon)\tau(S)$ where $\tau(S)$ is the
density of the current subgraph $S$, and removes all vertices with
at most $\tau$ \kc{}s. Tsourakakis~\cite{Tsourakakis15}
describes this procedure and shows that it computes a
$1/(k(1+\epsilon))$-approximation of the \kcds in $O(\log n)$ rounds. 
Although the round complexity in Tsourakakis' implementation
is low, no non-trivial bound was known for its work. 
\ourapproxpeel is similar to Tsourakakis' algorithm,
except we utilize the fast, parallel $k$-clique counting methods
introduced in this paper. We prove the following in the \intheapp{}.

\begin{restatable}[]{theorem}{approxpeeling}
\label{thm:approx-peeling}
\ourapproxpeel computes a $1/(k(1+\epsilon))$-approximation to the
\kcds and runs in $\BigO{m\alpha^{k-2}}$ work and
$\BigO{k \log^2 n}$ span \whp{}, and $\BigO{m + P \alpha}$ space.
\end{restatable}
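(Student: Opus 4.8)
The plan is to split the argument into correctness of the approximation, the work bound, and the span and space bounds, treating the work bound as the heart of the matter. For \textbf{correctness}, I would observe that \ourapproxpeel makes exactly the same peeling decisions as the procedure analyzed by Tsourakakis~\cite{Tsourakakis15}: in each round it removes precisely the set of vertices whose induced \kc count is at most the threshold $t = k(1+\epsilon)\tau(S)$, where $\tau(S)$ is the current \kc density, and it reports the maximum over all rounds of the density of the remaining subgraph. Hence the $1/(k(1+\epsilon))$-approximation guarantee transfers directly from Tsourakakis' analysis, and the only fact I must re-verify is that each round peels a constant fraction of the remaining vertices. Since the sum of the per-vertex \kc counts over $S$ equals $k\cdot C_k(S) = k|S|\tau(S)$ (each \kc is charged to its $k$ vertices), the average per-vertex count is $k\tau(S)$, so by Markov's inequality fewer than a $1/(1+\epsilon)$ fraction of vertices exceed $k(1+\epsilon)\tau(S)$. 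Thus at least an $\epsilon/(1+\epsilon)$ fraction is peeled each round, and the algorithm terminates deterministically in $\BigO{\log_{1+\epsilon} n} = \BigO{\log n}$ rounds.

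For the \textbf{work bound}, the key point is that the total recounting work across all rounds is $\BigO{m\alpha^{k-2}}$ rather than $\BigO{m\alpha^{k-2}\log n}$. The initial per-vertex count costs $\BigO{m\alpha^{k-2}}$ \whp{} by Theorem~\ref{thm:clique-list}. For the updates, I would charge the cost of peeling each vertex to a single run of \ourcount: when $v$ is peeled, \algname{Update} builds $I \subseteq N_{DG}(v)$ (restricted to still-present vertices, with the orientation preventing double counting) and counts $(k-1)$-cliques in $I$, which costs at most the work of counting $(k-1)$-cliques in the full out-neighborhood $N_{DG}(v)$. Because every vertex is peeled exactly once over the whole execution, summing this bound over all $v$ reproduces exactly the work performed in the first recursive level of \ourcount over all vertices, namely $\BigO{m\alpha^{k-2}}$ \whp{}. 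The remaining per-round overhead---computing $\tau(S)$, filtering vertices below the threshold, and maintaining the present set---is linear in the number of remaining vertices $|S_r|$, and since $|S_r|$ shrinks geometrically by the constant-fraction argument above, $\sum_r |S_r| = \BigO{n} = \BigO{m}$. Summing the three contributions gives total work $\BigO{m\alpha^{k-2}}$ \whp{}; note that, unlike \ourpeel, there is no additive $\rho_k(G)\log n$ term since no batch-parallel heap is used.

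For the \textbf{span}, I would note that the orientation is computed once up front in $\BigO{\log^2 n}$ span \whp{} (Theorem~\ref{thm:goodrich_efficient}) and is never recomputed; since intersecting a fixed out-neighborhood $N_{DG}(v)$ with the set of still-present vertices can only shrink it, the $\BigO{\alpha}$ out-degree bound used by \ourcount continues to hold in every round. Each round then costs only the span of one batch of $(k-1)$-clique counts plus a filter and a reduce, which is $\BigO{k\log n}$ \whp{} by the analysis of \ourcount (the hash-table intersections and reduces each contribute $\BigO{\log n}$ across $\BigO{k}$ recursive levels); over $\BigO{\log n}$ rounds this yields $\BigO{k\log^2 n}$ span \whp{}. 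The \textbf{space} is dominated by the graph in adjacency-hash-table form ($\BigO{m}$), the per-vertex count and present-set arrays ($\BigO{n} = \BigO{m}$), and the $\BigO{\alpha}$ stack-allocated working space per processor inherited from \ourcount{}, for a total of $\BigO{m + P\alpha}$. I expect the \emph{main obstacle} to be the work amortization: rigorously arguing that the recount cost over all $\BigO{\log n}$ rounds telescopes into a single $\BigO{m\alpha^{k-2}}$ count through the ``each vertex peeled once'' charging, while confirming that the double-counting avoidance in \algname{Update} only ever restricts $I$ to a subset of $N_{DG}(v)$ and so never inflates the per-vertex cost beyond one invocation of the counting recursion.
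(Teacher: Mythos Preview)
Your correctness, span, and space arguments are fine and line up with the paper. The gap is in the work bound, and it stems from a misreading of the \algname{Update} procedure.

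You assert that when $v$ is peeled, \algname{Update} builds $I \subseteq N_{DG}(v)$, i.e., a subset of the \emph{directed out}-neighborhood, and then charge the recount to ``one invocation of the counting recursion'' at $v$. But $I$ is not contained in $N_{DG}(v)$; it is a subset of the \emph{undirected} neighborhood $N_G(v)$ (filtered to still-present vertices, with the orientation used only to break ties among vertices peeled in the same round). This is unavoidable: to correctly decrement the counts of all remaining vertices, \algname{Update} must enumerate \emph{every} \kc containing $v$, not just those in which $v$ is lowest in the orientation. If $I$ were restricted to $N_{DG}(v)$, cliques in which $v$ is second, third, etc.\ in the orientation would never be seen, and the per-vertex counts would be wrong. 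Consequently $|I|$ can be as large as $d(v)$, not $O(\alpha)$, and the ``summing over $v$ reproduces one run of \ourcount'' charging collapses.

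The paper's work argument (to which the proof of this theorem defers via Theorem~\ref{thm:peelexact}) handles exactly this issue with two ingredients you do not invoke. First, Chiba--Nishizeki's Lemma~2 bounds the cost of the first recursive level across all peeled vertices: the intersection for the pair $(u,v)$ with $u\in N_G(v)$ costs $O(\min(d(u),d(v)))$, and $\sum_{(u,v)\in E}\min(d(u),d(v))=O(m\alpha)$. Second, the Nash--Williams theorem bounds the size of each induced neighborhood, $|G[N(v)]|\le \alpha(d(v)-1)$, so the remaining $k-3$ recursive levels cost $\sum_v O(|G[N(v)]|\alpha^{k-3}) = O(m\alpha)\cdot\alpha^{k-3} = O(m\alpha^{k-2})$. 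Since every vertex is peeled exactly once across all rounds, this charging gives the claimed $O(m\alpha^{k-2})$ total; your geometric-sum argument for the per-round filtering overhead is correct and supplies the remaining $O(n)$ term. To fix your proposal, replace the $I\subseteq N_{DG}(v)$ claim with the undirected-neighborhood analysis and invoke these two bounds.
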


Note that the span for \ourapproxpeel{} matches or improves upon
that for \ourpeel{}; notably, when $\rho_k(G) = o(\log n)$, then
\ourapproxpeel{} takes $\BigO{\rho_k(G) k
  \log n + \log^2 n}$ span \whp{}, which is better than what is stated in Theorem~\ref{thm:approx-peeling}.

\subsection{Practical Optimizations}\label{sec:peeling:optimize}
We use the same optimizations described in Section
\ref{sec:counting:optimize} for updating \kc counts.
Also, we use the bucketing structure given by Dhulipala et
al.~\cite{DhBlSh17}, which keeps buckets
relating \kc counts to vertices, but only
materializes a constant number of the lowest buckets. 
If large ranges of buckets contain no vertices, this structure
skips over such ranges, allowing for fast retrieval of
vertices to be peeled in every round using linear space.

\section{Experiments}\label{sec:eval}

\begin{table}[t]
\small
\centering
\setlength{\tabcolsep}{2pt}

\scalebox{0.9}{
\begin{tabular}{lll}
\toprule
& $n$         & $m$                  \\ \midrule
\textbf{com-dblp}~\cite{SNAP}. &317,080   & 1,049,866  \\ \hline
\textbf{com-orkut}~\cite{SNAP}.& 3,072,441 & 117,185,083 \\ \hline
\textbf{com-friendster}~\cite{SNAP}.& 65,608,366 & $1.806\times 10^9$  \\ \hline
\textbf{com-lj}~\cite{SNAP}.&3,997,962 & 34,681,189  \\ \hline
\textbf{ClueWeb}~\cite{Lemur}& 978,408,098 & $7.474 \times 10^{10}$ \\ \hline
\textbf{Hyperlink2014}~\cite{Meusel2015}& $1.725\times 10^9$ & $1.241\times 10^{11}$ \\ \hline
\textbf{Hyperlink2012}~\cite{Meusel2015}&  $3.564 \times 10^9$ &  $2.258\times 10^{11}$ \\ \hline
\end{tabular}
}
\caption{Sizes of our input graphs. ClueWeb, Hyperlink2012, and Hyperlink2014
are symmetrized to be undirected graphs, and are stored and read in a
compressed format from the Graph Based Benchmark Suite
(GBBS)~\cite{DhBlSh18}.
}\label{table:graph-stats}
\end{table}

\begin{table*}[!h]
  \small
\centering
\begin{tabular}{p{1.25cm}llllllllll}
\toprule
&   & $k=4$      & $k=5$     & $k=6$     & $k=7$       & $k=8$      & $k=9$        & $k=10$        & $k=11$       \\ \midrule

\textbf{com-} & \ourcount{} $T_{60}$\hspace{0.1cm} & \tblsty{0.10}  & \tblsty{0.13}   & \tblsty{0.30}  & \tblsty{2.05${}^e$}   & 24.06${}^e$ & 281.39${}^e$  & 2981.74${}^{*e}$ & $> 5$ hrs    \\
\textbf{dblp} & \ourcount{} $T_{1}$\hspace{0.1cm}  &    1.57 &  1.71 &  5.58 &  64.27 &  837.82 &  9913.01 &  $> 5$ hrs &  $> 5$ hrs      \\
& \kclist{} $T_{60}$\hspace{0.1cm}                 & 0.16     & 0.17    & $0.43^e$    & $4.28^e$      & $55.78^e$    & $640.48^e$     & $6895.16^e$     & $> 5$ hrs    \\
& \pivoter{} $T_{60}$  \hspace{0.1cm}  &   2.88 &         2.88     & 2.88     & 2.88 &     \tblsty{2.88}   & \tblsty{2.88 }   & \tblsty{2.88}   & \tblsty{2.88} \\
& \wco{} $T_{60}$  \hspace{0.1cm}        & 0.19      &0.37        &3.84      & 66.06        &1126.69     &9738.00     &  $> 5$ hrs     & $> 5$ hrs  \\
& \binaryjoin{} $T_{60}$  \hspace{0.1cm}  &   0.12 &          0.42      & 2.08     & 39.29        & 627.48    & 7282.79    & $> 5$ hrs    & $> 5$ hrs \\
\hline

\textbf{com-} & \ourcount{} $T_{60}$\hspace{0.1cm} & \tblsty{3.10} & \tblsty{4.94} & \tblsty{12.57}  & \tblsty{42.09} & \tblsty{150.87${}^\circ$}  & \tblsty{584.39${}^\circ$}  & 2315.89${}^\circ$ & 8843.51${}^{\circ e}$ \\
\textbf{orkut}& \ourcount{} $T_{1}$\hspace{0.1cm}  &  79.62 & 158.74 & 452.47  &  1571.49  & 5882.83 & $> 5$ hrs & $> 5$ hrs  & $> 5$ hrs            \\
& \kclist{} $T_{60}$ \hspace{0.1cm}       & 25.27    & 27.40  & 42.23    & $91.67^e$     & $293.92^e$   & $1147.50^e$    & $4666.03^e$     & $> 5$ hrs    \\
& \pivoter{} $T_{60}$  \hspace{0.1cm}  &   292.35 &        385.04    & 462.05    &517.29 &     559.75   & 598.88   & \tblsty{647.18}   &  \tblsty{647.18}\\
& \wco{} $T_{60}$  \hspace{0.1cm}                & 10.71        & 50.51        &267.47     &1398.89   &  6026.99     &$> 5$ hrs    & $> 5$ hrs    & $> 5$ hrs  \\
& \binaryjoin{} $T_{60}$  \hspace{0.1cm}  &    12.74 &         29.09      & 93.06     & 413.50     & 1938.06         & 9732.86    & $> 5$ hrs     & $> 5$ hrs \\
\hline

\textbf{com-}& \ourcount{} $T_{60}$ \hspace{0.1cm}& \tblsty{109.46}  &\tblsty{ 111.75} & \tblsty{115.52 }& \tblsty{139.98 }  & \tblsty{300.62}  & \tblsty{1796.12${}^e$} & \tblsty{16836.41${}^{\circ e}$} & $> 5$ hrs    \\
\textbf{friendster} & \ourcount{} $T_{1}$ \hspace{0.1cm} &     2127.79& 2328.48& 2723.53& 3815.24& 8165.76& $> 5$ hrs& $> 5$ hrs& $> 5$ hrs       \\
& \kclist{} $T_{60}$ \hspace{0.1cm}                & 1079.22  & 1104.28 & 1117.31  & 1162.84   & $1576.61^e$  & $4449.81^e$    & $> 5$ hrs     & $> 5$ hrs    \\
& \wco{} $T_{60}$  \hspace{0.1cm}                &    201.82         & 379.59      &1001.52     &4229.20    &  $> 5$ hrs  &  $> 5$ hrs  &  $> 5$ hrs  &  $> 5$ hrs \\
& \binaryjoin{} $T_{60}$  \hspace{0.1cm}   &    163.90 &          212.53      & 221.93     & 632.40     & 4532.60          &  $> 5$ hrs   & $> 5$ hrs     & $> 5$ hrs \\
 \hline

\textbf{com-lj} & \ourcount{} $T_{60}$ \hspace{0.1cm}& \tblsty{1.77}   & \tblsty{7.52}   & \tblsty{258.46} & \tblsty{10733.21 }& $> 5$ hrs  & $> 5$ hrs    & $> 5$ hrs     & $> 5$ hrs    \\
& \ourcount{} $T_{1}$\hspace{0.1cm}  & 33.04  & 231.15  & 8956.53  & $> 5$ hrs  & $> 5$ hrs  & $> 5$ hrs  & $> 5$ hrs  & $> 5$ hrs     \\
& \kclist{} $T_{60}$ \hspace{0.1cm}               & 7.53     & 22.13  & $647.77^e$  & $> 5$ hrs   & $> 5$ hrs  & $> 5$ hrs    & $> 5$ hrs     & $> 5$ hrs    \\
& \pivoter{} $T_{60}$  \hspace{0.1cm}  & 268.06  &  1475.99    & 7816.13   &$> 5$ hrs &  $> 5$ hrs    & $> 5$ hrs  &  $> 5$ hrs & $> 5$ hrs\\
& \wco{} $T_{60}$  \hspace{0.1cm}            &  6.62       & 80.78     &3448.70    & $> 5$ hrs   & $> 5$ hrs  & $> 5$ hrs    & $> 5$ hrs     & $> 5$ hrs\\
& \binaryjoin{} $T_{60}$   \hspace{0.1cm} &    4.10 &         42.32    &1816.87     & $> 5$ hrs     & $> 5$ hrs         & $> 5$ hrs     & $> 5$ hrs     & $> 5$ hrs  \\

\end{tabular}
\caption{Best runtimes in seconds for our parallel ($T_{60}$) and single-threaded ($T_1$) $k$-clique counting algorithm (\ourcount{}), as well as the best parallel runtimes from \kclist{}~\protect\cite{Danisch18}, \pivoter{}~\protect\cite{Jain2020}, \wco{}~\protect\cite{Mhedhbi2019}, and \binaryjoin{}~\protect\cite{BinaryJoin19}.
Note that we cannot report runtimes from \pivoter{} for the com-friendster graph, because for all $k$, \pivoter{} runs out of 
memory and is unable to complete $k$-clique counting.
The fastest runtimes for each experiment are bold and in green. All runtimes are from tests in the same computing environment, and include time spent preprocessing and counting (but not time spent loading the graph). 
For our parallel and serial runtimes and \kclist{}, we have chosen the fastest orientations and choice between node and edge parallelism per experiment. For the runtimes from \ourcount{}, we have noted the orientation used; ${}^\circ$ refers to the Goodrich-Pszona orientation, ${}^*$ refers to the orientation given by $k$-core, and no superscript refers to the orientation given by degree ordering. For the runtimes from \ourcount{} and \kclist{}, we have noted whether node or edge parallelism was used; ${}^e$ refers to edge parallelism, and no superscript refers to node parallelism.} \label{table:count}
\end{table*}

\begin{figure}[t]
    \centering
   \begin{subfigure}{.38\columnwidth}
   \centering
   \includegraphics[width=\columnwidth, page=1]{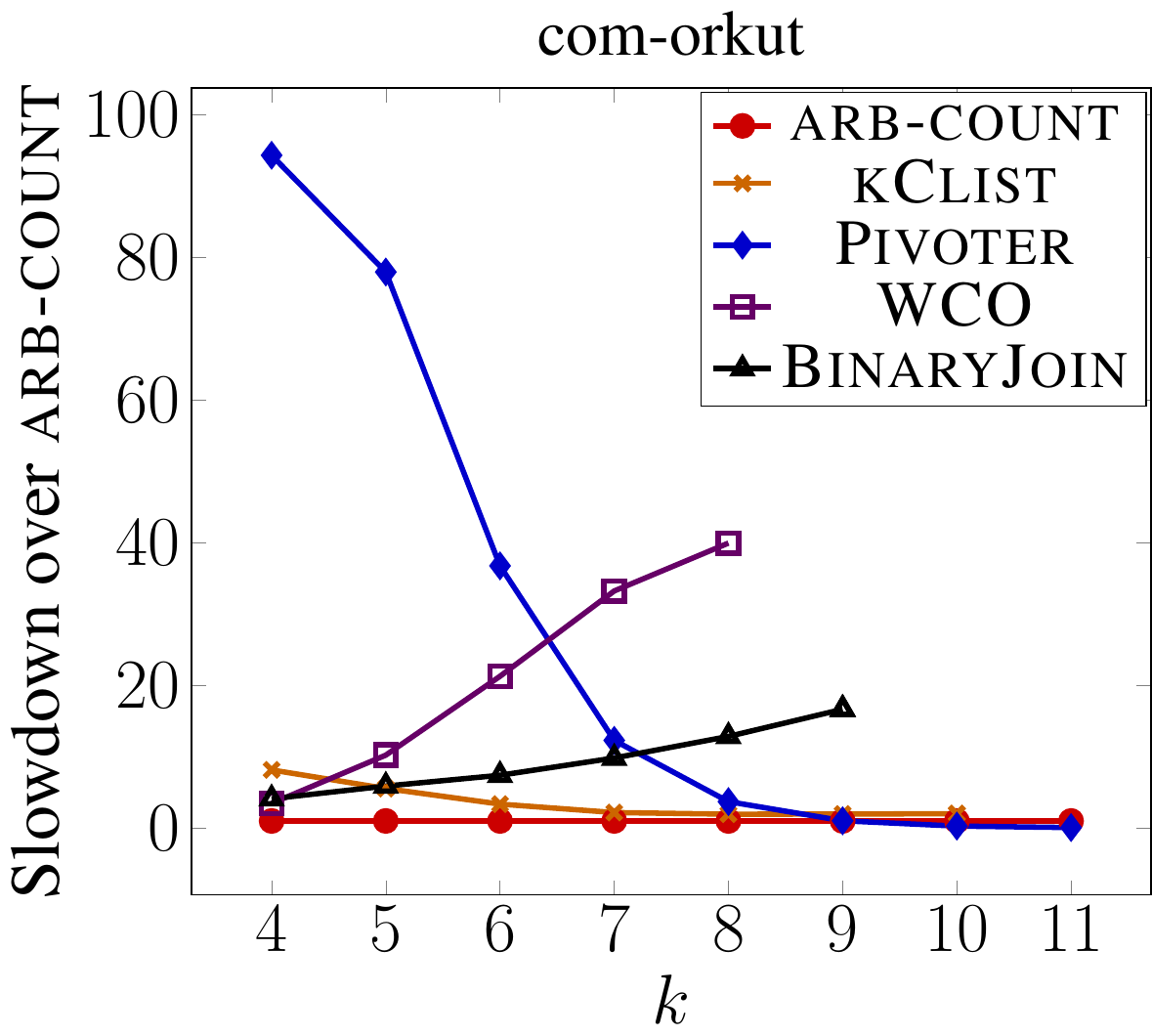}
   \end{subfigure}%
   \hfill
   \begin{subfigure}{.38\columnwidth}
   \centering
   \includegraphics[width=\columnwidth, page=2]{images/fig4.pdf}
      \end{subfigure}
   \caption{Multiplicative slowdowns of various parallel \kc counting implementations, compared to \ourcount, on com-orkut and com-friendster. The best runtimes for each implementation were used, and we have excluded any running time over 5 hours for \wco and \binaryjoin.
     Note that \pivoter was unable to perform \kc counting on com-friendster due to memory limitations, and as such is not included in this figure.
   }
    \label{fig:count-orkut-friendster}
\end{figure}

\begin{table*}[!h]
  \small
\centering
\begin{tabular}{lllllllll}
\toprule
& & $k=3$ & $k=4$                            & $k=5$   & $k=6$          & $k=7$        & $k=8$           & $k=9$                       \\ \midrule

\textbf{com-dblp} & \ourpeel{} $T_{60}$\hspace{0.1cm} & 0.14& \tblsty{0.21 }   & \tblsty{0.23${}^{\circ}$} & \tblsty{1.29${}^\circ$} & \tblsty{18.77}           & \tblsty{276.69${}^\circ$ } & \tblsty{3487.09${}^\circ$ }\\
& \ourpeel{} $T_{1}$\hspace{0.1cm}  & 0.27 &    0.37   & 1.378 & 17.99 & 258.24 & 3373.05 & $> 5$ hrs         \\
& \kclist{} $T_{1}$ \hspace{0.1cm}         &    0.19      & 0.25    & 1.10          & 14.98        & 221.98          & 2955.87         & $> 5$ hrs       \\ 
& \coreapp{} $T_{1}$\hspace{0.1cm} & \tblsty{ 0.10 } & 0.23 & 1.09        & 12.21    & 244.81      & 7674.55      & $> 5$ hrs \\  \hline

\textbf{com-orkut} & \ourpeel{} $T_{60}$\hspace{0.1cm} & \tblsty{ 33.15${}^{\circ}$} & \tblsty{76.91}   & \tblsty{221.28}         & \tblsty{721.73}       & \tblsty{2466.99$^\circ$} & \tblsty{9062.99${}^\circ$} & $> 5$ hrs       \\
& \ourpeel{} $T_{1}$  \hspace{0.1cm}&  130.04 &   184.28& 422.20  & 1032.19& 3123.72& $> 5$ hrs& $> 5$ hrs         \\
& \kclist{} $T_{1}$\hspace{0.1cm}      &    87.71          & 218.94  & 587.24         & 2029.43      & 7414.77         & $> 5$ hrs       & $> 5$ hrs       \\  
& \coreapp{} $T_{1}$\hspace{0.1cm} & 113.27 & 546.13 & 2460.65      & 16320.24 & $> 5$ hrs     & $> 5$ hrs    & $> 5$ hrs \\ \hline

\textbf{com-friendster} & \ourpeel{} $T_{60}$\hspace{0.1cm} &  \tblsty{ 371.52} &    \tblsty{ 1747.92}& \tblsty{4144.96 }& \tblsty{6870.06} & $> 5$ hrs & $> 5$ hrs & $> 5$ hrs        \\
& \ourpeel{} $T_{1}$\hspace{0.1cm}  & 3297.14&   11540.73  & 12932.28   &   14112.95  &      $> 5$ hrs            &       $> 5$ hrs           &      $> 5$ hrs            \\
& \kclist{} $T_{1}$ \hspace{0.1cm}        &   2225.70       & 3216.92 & 4325.73        & 6933.32      & $> 5$ hrs       & $> 5$ hrs       & $> 5$ hrs       \\  
& \coreapp{} $T_{1}$\hspace{0.1cm} & $> 5$ hrs &  $> 5$ hrs  &  $> 5$ hrs     & $> 5$ hrs  & $> 5$ hrs     & $> 5$ hrs    & $> 5$ hrs  \\ \hline

\textbf{com-lj} & \ourpeel{} $T_{60}$\hspace{0.1cm} & \tblsty{ 6.46 } &  \tblsty{ 26.36}& \tblsty{324.77}& \tblsty{12920.08}& $> 5$ hrs& $> 5$ hrs& $> 5$ hrs        \\
& \ourpeel{} $T_{1}$\hspace{0.1cm}  &  17.74 &   70.12  & 822.10 & $> 5$ hrs        &      $> 5$ hrs    &      $> 5$ hrs    &   $> 5$ hrs          \\
& \kclist{} $T_{1}$\hspace{0.1cm}           & 16.64        & 42.16   & 839.13         & $> 5$ hrs    & $> 5$ hrs       & $> 5$ hrs       & $> 5$ hrs      \\
& \coreapp{} $T_{1}$\hspace{0.1cm} & 7.20 & 27.53 & 1595.04     & $> 5$ hrs  & $> 5$ hrs     & $> 5$ hrs    & $> 5$ hrs 

\end{tabular}
\caption{Best runtimes in seconds for our parallel and single-threaded $k$-clique peeling algorithm (\ourpeel{}), as well as the best sequential runtimes from previous work (\kclist{} and \coreapp{})~\protect\cite{Danisch18, FaYuChLaLi19}. \kclist{} and \coreapp{} do not have parallel implementations of $k$-clique peeling; they are only serial. The fastest runtimes for each experiment are bolded and in green. All runtimes are from tests in the same computing environment, and include only time spent peeling. For our parallel runtimes, we have chosen the fastest orientations per experiment, while for our serial runtimes, we have fixed the degree orientation. For the parallel runtimes from \ourpeel{}, we have noted the orientation used; ${}^\circ$ refers to the Goodrich-Pszona orientation, and no superscript refers to the orientation given by degree ordering.}\label{table:peel}
\end{table*}

\myparagraph{Environment}
We run most of our experiments on a machine with
30 cores (with two-way hyper-threading),
with 3.8GHz Intel Xeon Scalable (Cascade Lake) processors and 240 GiB
of main memory. For our large compressed graphs, we use a machine with
80 cores (with two-way hyper-threading),
with 2.6GHz Intel Xeon E7 (Broadwell E7) processors and 3844 GiB of
main memory. We compile our programs with g++ (version 7.3.1)
using the \texttt{-O3} flag. We use OpenMP for our $k$-clique counting runtimes, and we use a lightweight scheduler called Homemade
for our $k$-clique peeling runtimes~\cite{BlAnDh20}.
We terminate
any experiment that takes over 5 hours, except for experiments on
the large compressed graphs.

\myparagraph{Graph Inputs} We test our algorithms on real-world
graphs from the Stanford Network Analysis Project
(SNAP)~\cite{SNAP}, CMU's Lemur
project~\cite{Lemur}, and the WebDataCommons
dataset~\cite{Meusel2015}. The details of the graphs are in Table~\ref{table:graph-stats},
and we show additional statistics in the \intheapp{}.

\myparagraph{Algorithm Implementations} We test different
orientations for our counting and peeling algorithms, including the
Goodrich-Pszona and Barenboim-Elkin orientations from Section
\ref{sec:counting:ranking}, with $\varepsilon =1$. We
also test other orientations that do not give work-efficient
and polylogarithmic-span bounds, but are fast in
practice, including the orientation given by ranking vertices by
non-decreasing degree, the orientation given by the $k$-core
ordering~\cite{MaBe83}, and the orientation given by the original
ordering of vertices in the graph.

Moreover, we compare our algorithms against
\kclist{}~\cite{Danisch18}, which contains state-of-the-art parallel
and sequential \kc counting algorithms, and sequential \kc peeling
implementations. \kclist{} additionally includes a parallel
approximate \kc peeling implementation. We include a simple
modification to their \kc counting code to support faster $k$-clique
counting, where we simply return the number of 
$k$-cliques instead of iterating over each $k$-clique in the final level of recursion. 
\kclist{} also
offers the option of node or edge parallelism, but only offers a
$k$-core ordering to orient the input graphs. Note that
\kclist{} does not offer a choice of orientation.

We additionally compare our counting algorithms to Jain and
Seshadhri's \pivoter{} algorithm~\cite{Jain2020}, Mhedhbi and
Salihoglu's worst-case optimal join algorithm
(\wco)~\cite{Mhedhbi2019}, Lai \textit{et al.}'s implementation of a
binary join algorithm (\binaryjoin{})~\cite{BinaryJoin19}, and Pinar
\textit{et al.}'s \algname{ESCAPE} algorithm~\cite{Pinar2017}.  Note
that \pivoter{} is designed for counting all cliques, and the latter
three algorithms are designed for general subgraph counting.  Finally, we
compare our approximate \kc counting algorithm to Bressan \textit{et
  al.}'s \algname{MOTIVO} algorithm for approximate subgraph
counting~\cite{Bressan2019}, which is more general. For \kc peeling,
we compare to Fang \textit{et al.}'s \coreapp{}
algorithm~\cite{FaYuChLaLi19} and Tsourakakis's~\cite{Tsourakakis15}
triangle densest subgraph implementation.

\subsection{Counting Results}\label{sec:counting-eval}
Table~\ref{table:count} shows the best parallel 
runtimes for $k$-clique counting over the SNAP datasets, from \ourcount, \kclist{}, \pivoter{}, \wco{}, and \binaryjoin{}, considering different orientations for \ourcount, and considering
node versus edge parallelism for \ourcount and for \kclist{}. 
We also show the best sequential runtimes from \ourcount{}.
We do not include triangle counting results, because for triangle
counting, our $k$-clique counting algorithm becomes precisely Shun and
Tangwongsan's~\cite{ShunT2015} triangle counting
algorithm. Furthermore, we performed experiments on \algname{ESCAPE}
by isolating their 4- and 5-clique counting code, but \kclist{} consistently
outperforms \algname{ESCAPE}; thus, we have not included \algname{ESCAPE}
in Table~\ref{table:count}. Figure~\ref{fig:count-orkut-friendster} shows the slowdowns of the parallel 
implementations over \ourcount on com-orkut and com-friendster.

We also obtain parallel runtimes for $k = 4$ on 
large compressed graphs, using degree ordering and node parallelism, 
on a 80-core machine with hyper-threading; note that 
\kclist{}, \pivoter{}, \wco{}, and \binaryjoin{} cannot handle these graphs. 
The runtimes are: 5824.76 seconds on ClueWeb with 74 billion 
edges ($< 2$ hours), 12945.25 seconds on Hyperlink2014 with over 
one hundred billion edges ($< 4$ hours), and 161418.89 seconds on 
Hyperlink2012 with over two hundred billion edges ($< 45$ hours). As 
far as we know, these are the first results for $4$-clique counting for graphs of this scale.

Overall, on 30 cores, \ourcount obtains speedups between 1.31--9.88x
over \kclist{}, between 1.02--46.83x over \wco{}, and between
1.20-28.31x over \binaryjoin{}.  Our largest speedups are for large
graphs (e.g., com-friendster) and for moderate values of $k$,
because we obtain more parallelism relative to the necessary work.

Comparing our parallel runtimes to \kclist{}'s serial runtimes (which
were faster than those of \wco{} and \binaryjoin{}), we obtain between
2.26--79.20x speedups, and considering only parallel runtimes over
0.7 seconds, we obtain between 16.32--79.20x speedups.
By virtue of our
orientations, our single-threaded runtimes are often faster than
the serial runtimes of the other implementations, with up to 23.17x
speedups particularly for large graphs and large values of $k$. Our
self-relative parallel speedups are between 13.23--38.99x.

We also compared with \pivoter{}~\cite{Jain2020}, which is designed
for counting all cliques, but can be truncated for fixed $k$. 
Their algorithm is able to count all cliques for 
com-dblp and com-orkut in under 5 hours. 
 However, their algorithm is not
theoretically-efficient for fixed $k$, taking $\BigO{n\alpha^2 3^{\alpha / 3}}$ work, and as such their parallel
implementation is up to 196.28x slower compared to parallel
\ourcount{}, and
their serial implementation is up to 184.76x slower
compared to single-threaded \ourcount{}.  These slowdowns are
particularly prominent for small $k$. 
Also, \pivoter{}'s truncated algorithm does not give significant 
speedups over their full algorithm, and \pivoter requires
significant space and runs out of memory 
for large graphs; it is unable to compute $k$-clique counts at all
for $k \geq 4$ on com-friendster.

Of the different orientations, using degree ordering is generally the
fastest for small $k$ because it requires little 
overhead and gives sufficiently low out-degrees. However, for
larger $k$, this overhead is less significant compared to the
time for counting and other orderings result in faster counting. The
cutoff for this switch occurs generally at $k=8$. 
Note that the Barenboim-Elkin and original orientations are never the fastest
orientations. The
slowness of the former is because it gives
a lower-granularity ordering, since it does not order between vertices
deleted in a given round. 
We found that the self-relative speedups of orienting the
graph alone were between 6.69--19.82x across all orientations, the
larger of which were found in large graphs.
We discuss preprocessing
overheads in more detail in the \intheapp{}.

Moreover, in both \ourcount{} and \kclist{}, node parallelism is
faster on small $k$, while edge parallelism is faster on large
$k$. This is because parallelizing the first level of recursion is
sufficient for small $k$, and edge parallelism introduces greater
parallel overhead. For large $k$, there is more work, which edge
parallelism balances better, and the additional parallel overhead is
mitigated by the balancing. The
cutoff for when edge parallelism is generally faster than node
parallelism occurs around $k=8$.  We provide more detailed
analysis in the \intheapp{}.

We also evaluated our approximate counting algorithm on com-orkut and com-friendster,
and compared to \algname{MOTIVO}~\cite{Bressan2019}. We defer 
a detailed discussion to the \intheapp{}. Overall, we obtain significant speedups over exact \kc counting
and have low error rates over the
exact global counts, with between
5.32--2189.11x speedups over exact counting and between 0.42--5.05\% error. We also
see 92.71--177.29x speedups over \algname{MOTIVO} for 4-clique and 5-clique approximate counting on
com-orkut.

\subsection{Peeling Results}

Table~\ref{table:peel} shows the best parallel and sequential runtimes
for $k$-clique peeling on SNAP datasets for \ourpeel,
\kclist{}, and \coreapp{} (\kclist{} and \coreapp{} only implement
sequential algorithms for exact $k$-clique peeling).

Overall, our parallel implementation obtains between 1.01--11.83x
speedups over \kclist{}'s serial runtimes. The higher speedups occur
in graphs that require proportionally fewer parallel peeling rounds
$\rho_k$ compared to its size; notably, com-dblp requires few parallel
peeling rounds, and we see between 4.78--11.83x speedups over
\kclist{} on com-dblp for $k\geq 5$. As such, our parallel speedups
are constrained by $\rho_k$.
Similarly, we obtain up to 53.53x
speedup over \coreapp{}'s serial runtimes. \coreapp{} outperforms our
parallel implementation on triangle peeling for 
com-dblp, again owing to the proportionally fewer parallel peeling
rounds in these cases.
\ourpeel achieves self-relative parallel speedups between 1.19--13.76x.
Our single-threaded runtimes are generally
slower than \kclist{}'s and \coreapp{}'s sequential runtimes owing to the parallel
overhead necessary to aggregate $k$-clique counting updates between
rounds.
In the \intheapp{}, we present a further analysis of the distributions of number of vertices peeled per round.

Moreover, the edge density of the approximate \kcds found by \ourpeel 
converges towards 1 for $k\geq 3$, and as such, \ourpeel is able to 
efficiently find large subgraphs that approach cliques. In particular, 
the \kcds that \ourpeel finds on com-lj contains 386 
vertices with an edge density of 0.992. Also, the \kcds
that \ourpeel finds on com-friendster contains 141 vertices with an
edge density of 0.993.

We also tested Tsourakakis's~\cite{Tsourakakis15} triangle densest
subgraph implementation; however, it requires
too much memory to run for com-orkut, com-friendster, and
com-lj on our machines. It completes $3$-clique peeling on com-dblp in
0.86 seconds, while our parallel \ourpeel{} takes 0.27 seconds.

Finally, we compared our parallel approximate
\ourapproxpeel to \kclist's parallel approximate algorithm 
on com-orkut and com-friendster.
\ourapproxpeel is up to 29.59x
faster than \kclist{} for large $k$, and we see between
5.95--80.83\% error on the maximum $k$-clique density obtained 
compared to the density obtained from $k$-clique peeling.

\section{Related Work}

\myparagraph{Theory} A trivial algorithm can compute all $k$-cliques
in $\BigO{n^k}$ work. Using degree-based thresholding enables
clique counting in $\BigO{m^{k/2}}$ work, which is asymptotically
faster for sparse graphs. Chiba and Nishizeki give an
algorithm with improved complexity for sparse graphs,
in which all \kc{}s can be found in $\BigO{m\alpha^{k-2}}$
work~\cite{ChNi85}, where $\alpha$ is the arboricity of the graph.

For arbitrary graphs, the fastest theoretical algorithm uses matrix
multiplication, and counts $3l$ cliques in $\BigO{n^{l\omega}}$ time
where $\omega$ is the matrix multiplication
exponent~\cite{nevsetvril1985complexity}. The \kc problem is a
canonical hard problem in the FPT literature, and is known to be
$W[1]$-complete when parametrized by $k$~\cite{downey1995fixed}.  We
refer the reader to~\cite{vassilevska2009efficient}, which surveys
other theoretical algorithms for this problem.

Recent work by Dhulipala et al.~\cite{DhulipalaLSY21} studied \kc
counting in the parallel batch-dynamic setting. One of their
algorithms calls our \ourcount{} as a
subroutine.

\myparagraph{Practice} The special case of counting and listing
triangles ($k=3$) has received a huge amount of attention over the
past two decades (e.g.,~\cite{TsDrMiKoFa11, Tsourakakis11a,
ShunT2015,  Pagh2012}, among many
others).  Finocchi et al.~\cite{Finocchi2015} present parallel \kc
counting algorithms for MapReduce.  Jain and Seshadri~\cite{Jain2017}
provide algorithms for estimating \kc counts.  The state-of-the-art
\kc counting and listing algorithm is \kclist by Danisch
et al.~\cite{Danisch18}, which is based on the
Chiba-Nishizeki algorithm, but uses the $k$-core 
ordering (which is not parallel) to rank vertices.
It achieves $\BigO{m\alpha^{k-2}}$ work, but
does not have polylogarithmic span due to the ordering
and only parallelizing one or two levels of recursion. 
Concurrent with our work, Li et al.~\cite{Li2020} present an ordering heuristic for \kc counting based on graph coloring, which they show improves upon \kclist in practice. It would be interesting in the future to study their heuristic applied to our algorithm.

Additionally, many algorithms have been designed for finding 4- 
and 5-vertex subgraphs
(e.g.,~\cite{Pinar2017,Park2018,AhmedNRDW17,Wang2018,Rossi0A19})
as well as estimating larger subgraph counts
(e.g.,~\cite{BressanCKLP18,Bressan2019}), and these algorithms can be
used for counting exact or approximate \kc counting as a special case.
Worst-case optimal join algorithms from the database
literature~\cite{AbergerLTNOR17,Ngo2018,Mhedhbi2019, BinaryJoin19} can also be used
for \kc listing and counting as a special case, and would require
$\BigO{m^{k/2}}$ work.

Very recently, Jain and Seshadri~\cite{Jain2020} present a sequential
and a vertex parallel \pivoter algorithm for counting all cliques in a
graph.
However, their algorithm cannot be used for
\kc listing as they avoid processing all cliques, and requires much
more than $\BigO{m\alpha^{k-2}}$ work in the worst case.

\myparagraph{Low Out-degree Orientations} A canonical technique in the
graph algorithms literature on clique counting, listing, and related
tasks~\cite{eppstein2010listing,Jain2020,Pinar2017} is the
use of a low out-degree orientation. Matula and Beck~\cite{MaBe83} show that $k$-core 
gives an $\BigO{\alpha}$ orientation.
However, the problem of computing this ordering is
\pcomplete{}~\cite{anderson84pcomplete}, and thus unlikely to have
polylogarithmic span.  More recent work in the distributed and
external-memory literature has shown that such orderings can be
efficiently computed in these settings. Barenboim and Elkin give a
distributed algorithm that finds an $O(\alpha)$-orientation in $O(\log
n)$ rounds~\cite{Barenboim10}.  Goodrich and Pszona give a similar
 algorithm for external-memory~\cite{Goodrich11}. Concurrent with our work, 
 Besta et al.~\cite{Besta2020} present a parallel algorithm for generating 
 an $O(\alpha)$-orientation in $O(m)$ work and $O(\log^2n)$ span, 
 which they use for parallel graph coloring.

\myparagraph{Vertex Peeling and $k$-clique Densest Subgraph} An
important application of \kc counting is its use as a subroutine
in computing generalizations of approximate densest subgraph.
In this paper, we study parallel algorithms for \kcds, a
generalization of the densest subgraph problem
introduced by Tsourakakis~\cite{Tsourakakis15}.  Tsourakakis presents
a sequential $1/k$-approximation algorithm based on iteratively
peeling the vertex with minimum \kc-count, and a parallel
$1/(k(1+\epsilon))$-approximation algorithm based on a parallel
densest subgraph
algorithm of  Bahmani et al.~\cite{Bahmani12}. 
Sun \textit{et al.}~\cite{SuDaChSo20} give additional approximation algorithms that converge to produce 
the exact solution over further iterations; these algorithms are more sophisticated and demonstrate 
the tradeoff between running times and relative errors.
Recently, Fang \textit{et al.}~\cite{FaYuChLaLi19} propose algorithms for finding
the largest $(j,\Psi)$-core of a graph, or the largest subgraph
such that all vertices have at least $j$ subgraphs $\Psi$ incident on
them. 
They propose an algorithm for $\Psi$ being a \kc
 that peels vertices with larger clique counts first and show
that their algorithm gives a $1/k$-approximation to the \kcds.

\section{Conclusion}
We presented new work-efficient parallel algorithms for \kc counting and peeling
with low span.  We showed
that our implementations achieve good parallel speedups
and significantly outperform state-of-the-art.
A direction for future work is designing
work-efficient parallel algorithms for the more general $(r,s)$-nucleus
decomposition problem~\cite{Sariyuce2017}.

\section*{Acknowledgments}

This research was supported by NSF Graduate Research Fellowship
\#1122374, 
DOE Early Career Award \#DE-SC0018947,
NSF CAREER Award \#CCF-1845763, Google Faculty Research Award, Google Research Scholar Award, DARPA
SDH Award \#HR0011-18-3-0007, and Applications Driving Architectures
(ADA) Research Center, a JUMP Center co-sponsored by SRC and DARPA.

\bibliographystyle{abbrv}
\bibliography{references}

\appendix

\section{Examples}

\begin{figure*}[!th]
\centering
    \includegraphics[width=\textwidth, page=1]{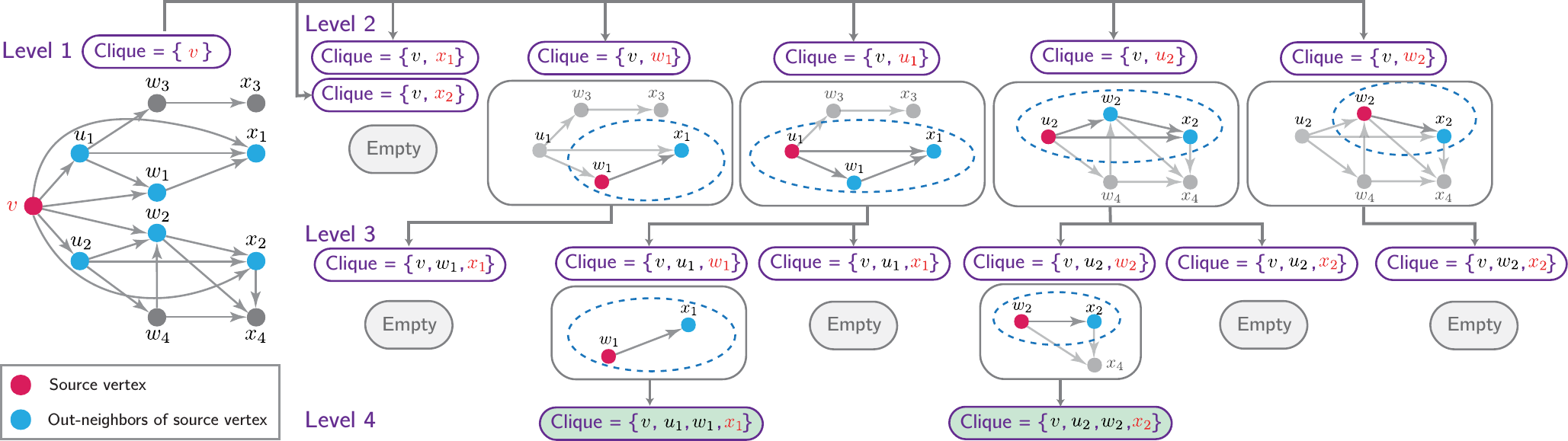}
    \caption{An example of our \kc counting algorithm for $k=4$.}\label{fig:count_example}
\end{figure*}

Figure \ref{fig:count_example} shows an example of our $k$-clique
counting algorithm, \ourcount{}, for $k=4$.  First, the graph is
directed as shown in Level 1. The algorithm then iterates over all
vertices in parallel, but for simplicity we only show a single process
starting from vertex $v$. Importantly, the algorithm must iterate over
all vertices, since there are 4-cliques that are not found by starting
at $v$. For instance, $\{w_2, x_2, w_4, x_4\}$ is a 4-clique that is
found by running this process starting from $w_4$.

We call $v$ the source vertex for its process in Level 1, and note
that $v$ is added to the growing clique set. Each directed
out-neighbor of $v$, in blue, spawns a new child task where the
out-neighbor is added to the growing clique set and is denoted as the
new source vertex, as shown in Level 2.

In Level 2, we take each source vertex, in red, and intersect its
out-neighbors, in blue, with its parent task's out-neighbors,
contained within the dashed blue circle.  Note that all vertices $x_i$
have no out-neighbors, so these tasks terminate here. For simplicity,
we have removed the parent task's source vertex $v$, because no level
2 source vertices may have $v$ as an out-neighbor by virtue of our
orientation, and we have also removed any vertices disconnected from
the source vertex. Now, each vertex in the aforementioned intersection
(i.e., each blue vertex in the dashed blue circle)
spawns a new child task, where it is added to the growing clique set
and becomes the new source vertex, as shown in Level 3. The dashed
blue circle shrinks to contain only the out-neighbors in the
intersection from Level 2.

We repeat this process for a final level, intersecting the
out-neighbors, in blue, of the source vertices, in red, with the
intersection from the previous level, in the dashed blue circle.
Each
vertex remaining in the intersection spawns a new child task. The
child tasks remaining in Level 4 represent our 4-cliques. For example,
the 4-clique $\{v, u_2, w_2, x_2\}$ is obtained by intersecting
$w_2$'s out-neighbors $\{x_2,x_4\}$ and $u_2$'s out-neighbors
$\{w_2,w_4,x_2\}$, which gives $\{x_2\}$ and adding the source
vertices on the path to this task, $\{v, u_2, w_2\}$.  Note that the
arrows between the levels in this figure represent the dependency
graph for the 4-clique counting computation; the spawned child tasks
are all safe to run in parallel.

\section{Proofs}
We present here proofs for the complexity bounds for our low out-degree orientation algorithms, approximate $k$-clique counting algorithm using colorful sparsification, \kc vertex peeling algorithm, and approximate \kc vertex peeling algorithm. Additionally, we prove the variance of our estimator in our approximate \kc counting algorithm, and we prove that the problem of obtaining the $k$-clique cores given by the $k$-clique vertex peeling process is \pcomplete{}.

\subsection{Low Out-degree Orientation (Ranking)}

\begin{algorithm}[!t]
  \footnotesize

 \begin{algorithmic}[1]
   \Procedure{Orient}{$G=(V,E)$, $\epsilon$}
   \State $n \gets |V|, L \gets []$
   \While{$G$ is not empty}
   \State $S \gets \epsilon n/(2+\epsilon)$ vertices of lowest
   induced degree \label{line:gp-peel}
   \State Append $S$ to $L$
   \State Remove vertices in $S$ from $G$
   \EndWhile
   \State \Return $L$
  \EndProcedure
 \end{algorithmic}
\caption{Goodrich-Pszona Orientation Algorithm} \label{alg:goodrich}
\end{algorithm}

\begin{algorithm}[!t]
  \footnotesize
 \begin{algorithmic}[1]
   \Procedure{Orient}{$G=(V,E)$, $\epsilon$, $\alpha$}
   \State $n \gets |V|, L \gets []$
   \While{$G$ is not empty}
   \State $S \gets \{v \in V\ |\ $ $v$'s induced degree less than $
   (2+\epsilon)\alpha\}$ \label{line:be-peel}
   \State Append $S$ to $L$
   \State Remove vertices in $S$ from $G$
   \EndWhile
   \State \Return $L$
  \EndProcedure
 \end{algorithmic}
\caption{Barenboim-Elkin Orientation Algorithm}
 \label{alg:barenboim}
\end{algorithm}

\algprefix{}s \ref{alg:goodrich} and \ref{alg:barenboim} shows pseudocode for the Goodrich-Pszona
algorithm and the
Barenboim-Elkin algorithm respectively. We present here the bounds for our parallelization of
these algorithms, which are described in Section \ref{sec:counting:ranking}.

\goodrichefficient*

\begin{proof}
The bounds on out-degree follow from the discussion in Section~\ref{sec:counting:ranking}. 
Also, as discussed in Section~\ref{sec:counting:ranking}, both algorithms run in
$\BigO{\log n}$ rounds, because each round removes a constant fraction of
the vertices. It remains to prove the work and span bounds for both algorithms.
Note that for both algorithms, we maintain the induced degrees of
all vertices in an array.

For the Goodrich-Pszona algorithm, we can filter out the vertices with
degree less than the $c$'th smallest degree vertex for $c = \epsilon
n/(2+\epsilon)$ using parallel integer sort, which runs in $\BigO{n'}$
work \whp{}, $\BigO{\log n}$ span \whp{}, and $\BigO{n'}$ space,
where $n'$ is the number of remaining vertices~\cite{RaRe89}.  For the
Barenboim-Elkin algorithm, we use a parallel filter, which takes
linear work, $\BigO{\log n}$ span, and linear space.
Overall, the total work to obtain vertices to process in each round for 
both algorithms is $\BigO{n}$ (\whp{} for Goodrich-Pszona), because each round removes
a constant fraction of vertices.

We can update the degrees of the remaining vertices after removing the
peeled vertices by mapping over all edges incident to these vertices,
and applying an atomic add instruction to decrement the degree of each
neighbor. Each edge is processed exactly once in each direction, when
its corresponding endpoints are peeled, and each vertex is peeled
exactly once, so the total work is $\BigO{m}$.  Since each peeling
round can be implemented in $\BigO{\log n}$ span, and there are
$\BigO{\log n}$ such rounds, the span of both algorithms is
$\BigO{\log^2 n}$.

Finally, computing an estimate of the arboricity using the parallel
densest-subgraph algorithm from~\cite{DhBlSh18} can be done in
$\BigO{m}$ work, $\BigO{\log^2 n}$ span, and $\BigO{m}$ space, which
does not asymptotically increase the cost of running the
Barenboim-Elkin algorithm.
\end{proof}

\subsection{Sampling}\label{sec:proofs}

We present here the proof for the variance of our estimator in approximate $k$-clique counting through colorful sparsification, which is described in Section~\ref{sec:counting:sampling}.

\begin{theorem}\label{thm:unbiased}
Let $X$ be the true \kc count in $G$, $C$ be the \kc count in $G'$,
$p=1/c$, and $Y=C/p^{k-1}$. Then $\mathbb{E}[Y] = X$ and
$\mathbb{V}ar[Y] = p^{-2(k-1)}(X(p^{k-1}-p^{2(k-1)}) +
\sum_{z=2}^{k-1}s_z(p^{2(k-1)-z+1}-p^{2(k-1)}))$, where $s_z$ is the
number of pairs of \kc{}s that share $z$ vertices.
\end{theorem}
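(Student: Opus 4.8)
The plan is to express the count $C$ in the sparsified graph $G'$ as a sum of indicator random variables, one per $k$-clique of $G$, and to analyze the first two moments directly. For the $i$-th $k$-clique let $X_i$ be the indicator that all $k$ of its vertices receive the same color under the random coloring, which is exactly the event that the clique survives the colorful sparsification. Then $C = \sum_i X_i$ and $Y = C/p^{k-1}$. First I would compute $\mathbb{E}[X_i]$: there are $c$ choices for the common color, each occurring with probability $(1/c)^k$, so $\mathbb{E}[X_i] = c \cdot (1/c)^k = (1/c)^{k-1} = p^{k-1}$. By linearity, $\mathbb{E}[C] = X p^{k-1}$ and hence $\mathbb{E}[Y] = X$, giving unbiasedness.

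For the variance I would use the decomposition $\mathbb{V}ar[C] = \sum_i \mathbb{V}ar[X_i] + \sum_{i \neq j} \mathrm{Cov}[X_i, X_j]$ and handle the diagonal and off-diagonal terms separately. Each $X_i$ is Bernoulli with parameter $p^{k-1}$, so $\mathbb{V}ar[X_i] = p^{k-1} - p^{2(k-1)}$; summing over all $X$ cliques contributes $X(p^{k-1}-p^{2(k-1)})$, which matches the first term of the claimed expression after scaling by $p^{-2(k-1)}$.

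The heart of the argument is the covariance between two distinct cliques that share exactly $z$ vertices. The key observation is that if $z \geq 1$ the two cliques share at least one vertex, so both survive only if the entire union of $2k-z$ vertices takes a single common color; this event has probability $p^{(2k-z)-1} = p^{2(k-1)-z+1}$, giving $\mathrm{Cov}[X_i,X_j] = p^{2(k-1)-z+1} - p^{2(k-1)}$. When $z=0$ the two colorings are independent and the covariance vanishes, and when $z=1$ the exponent collapses to $2(k-1)$ so the covariance vanishes again; this is precisely why the sum begins at $z=2$. Since two distinct $k$-cliques overlap in at most $k-1$ vertices, summing the covariance over all such pairs (collected by their overlap $z$ into the counts $s_z$) yields $\sum_{z=2}^{k-1} s_z\bigl(p^{2(k-1)-z+1} - p^{2(k-1)}\bigr)$. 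Combining the diagonal and off-diagonal contributions and using $\mathbb{V}ar[Y] = p^{-2(k-1)}\,\mathbb{V}ar[C]$ gives the stated formula.

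I expect the main obstacle to be the covariance case analysis, in particular articulating cleanly why a single shared vertex forces the entire union of both cliques to take one color (so the joint survival probability is $p^{2(k-1)-z+1}$ rather than the independent value $p^{2(k-1)}$), together with verifying the boundary cases $z\in\{0,1\}$ that make those pairs contribute nothing. A secondary bookkeeping point is fixing the convention for $s_z$ (ordered versus unordered pairs) so that the covariance sum $\sum_{i\neq j}$ over ordered pairs lines up with the single factor of $s_z$ in the statement; I would define $s_z$ so that this factor comes out exactly right.
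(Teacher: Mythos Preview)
Your proposal is correct and takes essentially the same approach as the paper: indicator variables for each \kc, linearity for the mean, the variance/covariance decomposition, and the case analysis on the overlap size $z$ with the same computation $\mathbb{E}[X_iX_j]=p^{2(k-1)-z+1}$. Your observation about the ordered-versus-unordered convention for $s_z$ is a genuine bookkeeping point that the paper leaves implicit (it sums over ordered pairs $i\neq j$), so you are right to flag it.
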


\begin{proof}
Let $C_i$ be an indicator variable denoting whether the $i$'th \kc in
$G$ is preserved in $G'$. For a \kc to be preserved, all $k$ vertices
in the clique must have the same color. This happens with probability
$p^{k-1}$ since after fixing the color of one vertex $v$ in the
clique, the remaining $k-1$ vertices must have the same color as
$v$. Each vertex picks a color independently and uniformly at random,
and so the probability of a vertex choosing the same color as $v$ is
$p$.  The number of \kc{}s in $G'$ is equal to
$C=\sum_i{C_i}$. Therefore $\mathbb{E}[C] = \sum_i \mathbb{E}[C_i] =
Xp^{k-1}$. We have that $\mathbb{E}[Y] = \mathbb{E}[C/p^{k-1}] =
(1/p^{k-1})\mathbb{E}[C] = (1/p^{k-1})Xp^{k-1}=X$.

The variance of $Y$ is $\mathbb{V}ar[Y] = \mathbb{V}ar[C/p^{k-1}] =
\mathbb{V}ar[C]/p^{2(k-1)}=\mathbb{V}ar[\sum_iC_i]/p^{2(k-1)}$. By
definition $\mathbb{V}ar[\sum_iC_i] = \sum_i
(\mathbb{E}[C_i]-\mathbb{E}[C_i]^2) + \sum_{i\neq
  j}\mathbb{C}ov[C_i,C_j]$. The first term is equal to
$X(p^{k-1}-p^{2(k-1)})$. To compute the second term, note that
$\mathbb{C}ov[C_i,C_j]=\mathbb{E}[C_iC_j]-\mathbb{E}[C_i]\mathbb{E}[C_j]$
depends on the number of vertices that cliques $i$ and $j$
share. Their covariance is $0$ if they share no vertices.  Their
covariance is also $0$ if they share one vertex since the event that
the remaining vertices of each clique have the same color as the
shared vertex is independent between the two cliques.  Let $s_z$
denote the number of pairs of cliques that share $z>1$ vertices. For
pairs of cliques sharing $z$ vertices, we have
$\mathbb{E}[C_iC_j]=p^{2(k-1)-z+1}$. This is because after fixing the color of one of the shared vertices, there are $2(k-1)-z+1$  remaining vertices in the pair of cliques to color, and the probability that they all match the fixed color is $p^{2(k-1)-z+1}$.
Therefore $\mathbb{C}ov[C_i,C_j]=\mathbb{E}[C_iC_j]-\mathbb{E}[C_i]\mathbb{E}[C_j] =
p^{2(k-1)-z+1}-p^{2(k-1)}$.
In total, we have $\mathbb{V}ar[Y] =
p^{-2(k-1)}(X(p^{k-1}-p^{2(k-1)}) +
\sum_{z=2}^{k-1}s_z(p^{2(k-1)-z+1}-p^{2(k-1)}))$.
\end{proof}

We also give the proof for the work and span of approximate $k$-clique
counting through colorful sparsification.

\sampworkspan*

\begin{proof}
Theorem~\ref{thm:unbiased} states that our algorithm produces an
unbiased estimate of the global count. We now analyze the work and
span of the algorithm. Choosing colors for the vertices can be done in
$O(n)$ work and $O(1)$ span. Creating a subgraph containing edges with
endpoints having the same color can be done using prefix sum and
filtering in $O(m)$ work and $O(\log m)$ span.  Each edge is kept with
probability $p$ as the two endpoints will have matching colors with
this probability. Therefore, our subgraph has $pm$ edges in
expectation. The arboricity of our subgraph is upper bounded by the
arboricity of our original graph, $\alpha$, and so including the work
of running our \kc counting algorithm on the subgraph, the sampling
algorithm takes $O(pm\alpha^{k-2}+m)$ work and $O(k \log n + \log^2
n)$ span \whp{}.
\end{proof}

\subsection{Vertex Peeling}

We present here the proof that \ourpeel correctly generates a subgraph
with the same approximation guarantees of Tsourakakis' sequential
\kcds algorithm~\cite{Tsourakakis15}, as well as the following bounds on the
complexity of \ourpeel.

\peelexact*

\begin{proof}

This proof uses the Nash-Williams theorem~\cite{nash1961edge}, which
states that a graph $G$ has arboricity $\alpha$ if and only if for
every $U \subseteq V$, $|G[U]| \leq \alpha(|U| - 1)$.  Here, $G[U]$ is
the subgraph of $G$ induced by the vertices in $U$, and $|G[U]|$ is
the number of edges in $G[U]$.

First, we provide a proof on the correctness of our \kc peeling
algorithm. Tsourakakis~\cite{Tsourakakis15} proves that the sequential
$k$-clique densest subgraph algorithm that peels vertices one by one
in increasing order of \kc count attains a $1/k$-approximation to the
$k$-clique densest subgraph problem. We note that among vertices
within the same \kc core, the order in which these vertices are peeled
does not affect the approximation; this follows directly from
Tsourakakis's sequential algorithm, in which in any given round, any
vertex with the same minimum \kc count may be peeled. Additionally,
given a set of vertices with the same minimum \kc count in any given
round, peeling a vertex from this set does not change the \kc core
number of any other vertex in this set by definition.

As such, in order to show the correctness of \ourpeel, it suffices to
show that first, \ourpeel peels vertices in the same order as given by
Tsourakakis's sequential algorithm, except for vertices in the same
\kc core which may be peeled in any order among each other, and
second, \ourpeel correctly updates the \kc counts after peeling these
vertices. The first claim follows from the structure of \ourpeel,
because \ourpeel peels all vertices with the same minimum \kc count in
each round, which may be serialized to any order. The second claim
follows from the correctness of \ourcount in
Section~\ref{sec:counting}.  \ourpeel first obtains for each peeled
vertex $v$ all undirected neighbors $N(v)$, and then uses the
subroutine from \ourcount to obtain $k$-cliques originating from
$v$. This is equivalent to first obtaining the induced subgraph on
$N(v)$, and then performing \ourcount to obtain $(k-1)$-cliques on the
induced subgraph; thus, this gives all $k$-cliques containing $v$. The
additional filtering of already peeled vertices ensures that
previously found $k$-cliques are not recounted, and filtering other
vertices peeled simultaneously based on a total order ensures that
$k$-cliques involving these vertices are not recounted.  Thus,
\ourpeel gives a $1/k$-approximation to the $k$-clique densest
subgraph problem.

The proof of the complexity is similar in spirit to that of
Theorem~\ref{thm:clique-list}, but there are some subtle and important
differences. First, unlike in our \kc counting algorithm
(\algprefix~\ref{alg-count}), our peeling algorithm does not have the
luxury of only finding \kc{}s directed out of a peeled vertex
$v$. Instead, it must find \emph{all} \kc{}s that $v$ participates in
and decrement these counts.  Arguing that this does not
cost a prohibitive amount of work is the main challenge of the
proof. Importantly, our peeling algorithm calls the recursive
subroutine \algname{rec-count-cliques-v} of our \kc counting algorithm
directly, on a different input than used in the full \kc counting
algorithm, and so the analysis of the work and span differs from the
analysis given in Section~\ref{sec:counting:listing}.

We first account for the work and span of extracting and updating the
bucketing structure. The overall work of inserting vertices into the
structure is $O(n)$. Each vertex has its bucket
decremented at most once per \kc, and since there are at most
$O(m\alpha^{k-2})$ $k$-cliques, this is also the total cost for updating buckets
of vertices. Lastly, removing the minimum bucket can
be done in $O(\log n)$ amortized expected work and $O(\log n)$ span
\whp{}, which costs a total of $O(\rho_k(G)\log n)$ amortized expected
work, and $O(\rho_k(G)\log n)$ span \whp{}.

Next, to bound the cost of finding all \kc{}s incident to a peeled
vertex $v$, we rely on the Nash-Williams theorem, which provides a
bound on the size of induced subgraphs in a graph with arboricity
$\alpha$.  Notably, in the first level of recursion when \algname{rec-count-cliques-v} 
is called from the \algname{Update} subroutine,
the intersect operations performed essentially compute the induced
subgraph on the neighbors of each peeled vertex $v$; this is because
during this call, we intersect the directed neighbors of each
vertex in $N_G(v)$ (that has not been previously peeled) with $N_G(v)$
itself, producing a pruned version of the induced subgraph of $N_G(v)$
on $G$.  We have that for each $v \in V$, the induced subgraph on its
neighbors has size $|G[N(v)]| \leq \alpha(|N(v)| - 1) = \alpha(d(v) -
1)$. Assuming for now that we can construct the induced subgraph on
all vertex neighborhoods in work linear in their size, summed over all
vertices, the overall cost is just
  \begin{equation}\label{eqn:nwhood}
    \sum_{v \in V} |G[N(v)]| \leq \alpha \sum_{v \in V} (d(v) - 1) =
    \BigO{m\alpha}
  \end{equation}

How do we build these subgraphs in the required work and span? Our
approach is to do so using an argument similar to the elegant proof
technique proposed in Chiba-Nishizeki's original $k$-clique listing
algorithm~\cite{ChNi85}. Because the first call to
\algname{rec-count-cliques-v} takes each vertex $u \in N_G(v)$ and
intersects the directed neighbors $N_{DG}(u)$ with $N_G(v)$, we use
$\BigO{\min(d(u), d(v))}$ work to build the induced subgraph on $v$'s
neighborhood. Observe that each edge in the graph is processed by an
intersection in this way exactly once in each direction, when
each endpoint is peeled.  By Lemma 2 of~\cite{ChNi85}, we know that $
\sum_{e=(u,v) \in E} \min(d(u), d(v)) = \BigO{m\alpha}$ and therefore
the overall work of performing all intersections is bounded by
$\BigO{m\alpha}$, and the per-vertex induced subgraphs can therefore
also be built in the same bound. The span for this step is
$\BigO{\log n}$ \whp{}~using parallel hash tables~\cite{gil91a}.

Lastly, we account for the remaining cost of performing $k$-clique
counting within each round. We now recursively call
\algname{rec-count-cliques-v} $k-1$ times in total, as $\ell$ ranges
from $1$ to $k-1$, but the final recursive call returns the size of
$I$ immediately, and we have already discussed the work of the first
call to \algname{rec-counts-v}. Considering the remaining $k-3$
recursive steps with non-trivial work, we have $\BigO{m'\alpha^{k-3}}$
work and $\BigO{k \log n}$ span where $m'$ is the size of the
vertex's induced neighborhood.  Considering the work first, summed
over all vertices' induced neighborhoods, the total work is
  \[
    \sum_{v \in V} \operatorname{O}\left( |G[N(v)]| \alpha^{k-3} \right) =
    \BigO{m\alpha^{k-2}}
  \]
which follows from Equation~\ref{eqn:nwhood}. The span follows, since
adding in the span of the first recursive call, we have
$\BigO{k \log n}$ span to update $k$-clique counts per peeled
vertex, and there are $\rho_k(G)$ rounds by definition.
\end{proof}

\subsection{Parallel Complexity of Vertex Peeling}\label{apx:complexity}

The $k$-clique vertex peeling algorithm exactly computes for all $c$ the
\emph{\kc $c$-cores} of the graph. The \defn{\kc $c$-core} of a graph
$G$ is defined as the maximal subgraph such that every vertex is
contained within at least $c$ \kc{}s.
This is a generalization of the classic \defn{$c$-core} problem, which
is the maximal subgraph such that each vertex has degree at least $c$.
The $c$-core problem is well known to be \pcomplete{} for $c \geq
3$~\cite{anderson84pcomplete}.
We present here the proof that computing \kc $c$-cores is
\pcomplete{} for $c>2$. We first observe that since the
number of $k$-cliques incident to each vertex can be efficiently
computed in \nc{} by
Theorem~\ref{thm:clique-list}, the problem of computing \kc $1$-cores is
in \nc{} for constant $k$.

\myparagraph{\kc $\bm{c}$-cores when $\bm{c > 2}$}
Next, we study the parallel complexity of computing \kc $c$-cores
for $c > 2$. We will show that there is an \nc{} reduction from the
problem of deciding whether the $c$-core is non-empty, to the problem of
deciding whether the \kc $c$-core is non-empty.
We first discuss the reduction at a high level. The input is a graph
and some value $c$, and the problem is to decide whether the $c$-core
is non-empty. The idea is to map the original peeling process to
compute the $c$-core to a peeling process to compute the \kc $c$-core.
Figure~\ref{fig:one_s_reduction} shows the reduction for $k=3, c=7$,
and marks the initial \kc degrees of each vertex in red.

\begin{figure}[t!]
  \centering
    \includegraphics[width=0.45\textwidth]{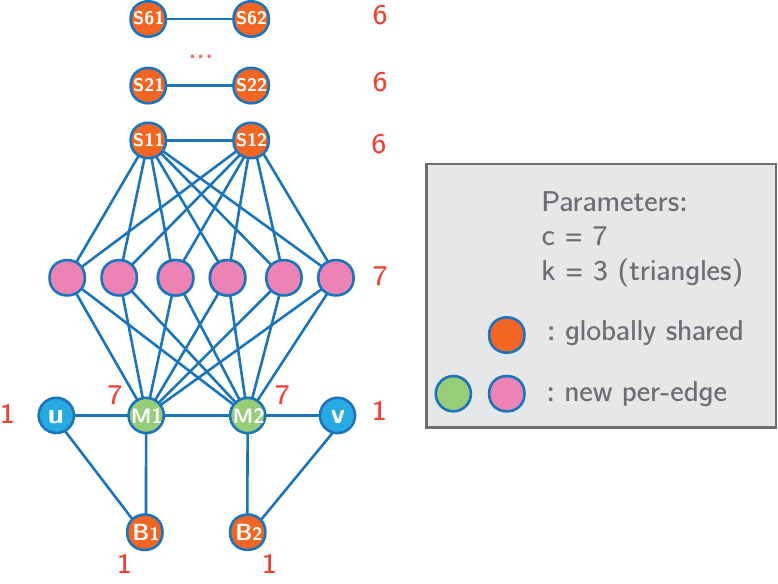}
    \caption{Gadget used for the reduction for computing 3-clique
    $c$-cores for an original edge $(u,v)$, shown for $c=7$. The red
    numbers next to each vertex show the vertex's $3$-clique degree, or the
    number of $3$-cliques incident per vertex before peeling (as a
    contribution of the gadget for this $(u,v)$ edge). The pink gadget
    vertices and green middle vertices are distinct per edge in the
    reduction. The orange special vertices and base vertices are
    created only once and globally shared.
    The gadget shown here gives each of the green middle vertices,
    $M_1$ and $M_2$, a $3$-clique degree of $c=7$, with one of the incident
    triangles being formed by an original vertex endpoint (either $u$
    or $v$), and the remaining being formed with the pink gadget
    vertices. Each of the pink gadget vertices forms one triangle with
    the green middle vertices, and the remaining $c-1$ triangles with
    $c-1$ pairs of special (globally shared) vertices, $S_{ij}$.
    Modifying this construction for different values of $c$ is done by creating
    $c-1$ pink gadget vertices and $c-1$ special-vertex pairs.
    To generalize to $k > 3$ we make the base vertices $B_{1}$
    and $B_{2}$ each a $(k-2)$-clique which are fully connected to
    $u$ and one of the green middle vertices to form \kc{}s.
    Similarly, each pink gadget vertex becomes a $(k-2)$-clique to form
    $c-1$ \kc{}s with the green middle vertices.
    }\label{fig:one_s_reduction}
\end{figure}

The reduction works as follows. We break up each edge $(u,v)$ in the
graph into four vertices connected in a path with the left-most and
right-most vertices corresponding to the original vertices $u$ and
$v$. The middle vertices are $M_{1}$ and $M_{2}$ (shown in green in
Figure~\ref{fig:one_s_reduction}). We create gadgets to increase each
of the two middle vertices' \kc{} degrees to $c$. The gadgets are
constructed so that if either of the original vertices
has its \kc{} degree go below $c$ and is thus not in the \kc $c$-core,
then the path corresponding to this edge will unravel, and the other
original vertex will have its \kc{} degree decremented by one, exactly
as in the $c$-core peeling process.

The reduction constructs $c-1$ $k$-cliques between the two middle
vertices in the path, and a set of new gadget vertices for this edge
(shown in pink in Figure~\ref{fig:one_s_reduction}). It also
constructs one \kc between each original edge endpoint, its
neighboring middle vertex, and a specially designated set of base
vertices, which are globally shared. The last part of the construction
ensures that the gadget vertices have large enough \kc degrees by
creating $c-1$ $k$-cliques between them and a set of special vertices,
$S_i$'s, which are globally shared.

To argue that this reduction is correct, it suffices to show that the
\kc $c$-core is non-empty if and only if the $c$-core of the original
graph is non-empty. We only argue the reverse direction, since the
proof for the forward direction is almost identical. Suppose the input
graph has a non-empty $c$-core, $C$.  Then, observe that all of the
original vertices corresponding to $C$ in the reduction graph
have \kc degree at least $c$.  Furthermore, since all of the middle
vertices that are added for original edges in the $c$-core initially
have \kc degree exactly $c$, the gadget vertices corresponding to
these edges have \kc degree exactly $c$. It remains to argue that the
special vertices and the base vertices have sufficient \kc degree.
Observe that the special vertices connected to all gadget vertices for
the edges form $k$-cliques with all gadget vertices, and thus have \kc
degree $(c-1)|E(G[C])|$, where $E(G[C])$ is the set of edges in the
induced subgraph on $C$. Since a $c$-core on $|C|$ vertices must have
at least $c|C|$ edges, $(c-1)|E(G[C])| \geq (c-1)c|C| > c$.  Similarly
for the base vertices, they form a single \kc for each edge in the
$c$-core, and so the base vertices have \kc degree $c|C|$.
Thus, the subgraph corresponding to the original vertices, middle
vertices, and gadget vertices for these edges, and the special and
base vertices all have sufficient \kc degree to form a non-empty \kc
$c$-core.

By the discussion above, we have shown that computing \kc $c$-cores is
\pcomplete{} for $c>2$, a strengthening of the original
\pcomplete{}ness result of Anderson and
Mayr~\cite{anderson84pcomplete}.

An interesting question is to understand the parallel complexity of
computing \kc $2$-cores for any constant $k$. For $k=2$, Anderson
and Mayr observed that this problem is in \nc{}.
We leave it for future work to
determine whether a similar algorithm can find the \kc $2$-cores
in \nc{} for $k > 2$.

\subsection{Approximate Vertex Peeling}

We now present the proof for the work and span of our approximate $k$-clique peeling algorithm, \ourapproxpeel, from Section \ref{sec:peeling:approx}.

\approxpeeling*
\begin{proof}
The correctness and approximation guarantees of this algorithm follows
from~\cite{Tsourakakis15}. The work bound follows similarly to the
proof of Theorem~\ref{thm:peelexact}. Also, filtering the
remaining vertices in each round can be done in $O(n)$ total
work, since a constant fraction of vertices are peeled in
each round. The span bound follows because there are
$O(\log n)$ rounds in total, and since each round runs in
$\BigO{k \log n}$ span, again using the same argument given in Theorem \ref{thm:peelexact}.
\end{proof}

\begin{table*}[t]
\small
\centering
\setlength{\tabcolsep}{2pt}

\scalebox{0.9}{
\begin{tabular}{lllllllll}
\toprule
 & $k=4$      & $k=5$      & $k=6$       & $k=7$       & $k=8$       & $k=9$       & $k=10$      & $k=11$               \\ \midrule
\textbf{com-dblp} &16,713,192 & 262,663,639 & $4.222 \times 10^9$ & $6.091 \times 10^{10}$ & $7.772\times 10^{11}$ & $8.813 \times 10^{12}$ & $8.956 \times 10^{13}$& --- \\ \hline
\textbf{com-orkut}& $3.222 \times 10^9$ & $1.577 \times 10^{10}$ &  $7.525 \times 10^{10}$& $3.540 \times 10^{11}$ & $1.633 \times 10^{12}$ & $7.248 \times 10^{12}$ & $3.029\times 10^{13}$ & $1.171 \times 10^{14}$\\ \hline
\textbf{com-friendster}&  $8.964 \times 10^9$ & $2.171 \times 10^{10}$ & $5.993 \times 10^{10}$ & $2.969 \times 10^{11}$ & $3.120 \times 10^{12}$ & $4.003 \times 10^{13}$ & $4.871 \times 10^{14}$& --- \\ \hline
\textbf{com-lj}& $5.217 \times 10^9$ & $2.464 \times 10^{11}$ & $1.099 \times 10^{13}$ & $4.490 \times 10^{14}$& ---&---&---&--- \\ \hline
\textbf{ClueWeb}&  $2.968\times 10^{14}$ & --- & --- & ---& ---&---&---&---\\ \hline
\textbf{Hyperlink2014}& $7.500\times 10^{14}$ & --- & --- & ---& ---&---&---&---\\ \hline
\textbf{Hyperlink2012}&   $7.306\times 10^{15}$ & --- & --- & ---& ---&---&---&---\\ \hline
\end{tabular}
}
\caption{Total $k$-clique counts for our input graphs. Note that we do not have statistics for certain graphs for large values of $k$, because  algorithm did not terminate in under 5 hours; these entries are represented by a dash.}\label{table:graph-stats2}
\end{table*}

\section{Additional Experiments and Data}\label{sec:additional-figures}
This section presents additional experimental data from our evaluation.

\subsection{Counting Results}

Figure~\ref{table:graph-stats2} shows the \kc counts that we obtained from our algorithms. Figure \ref{fig:distribution} shows the frequencies of the 4-clique counts per vertex, as obtained using \ourcount{}, on the large ClueWeb and Hyperlink2014 graphs. We see that the number of vertices decreases roughly exponentially as a function of the 4-clique count.

\begin{figure*}[!t]
    \centering
   \begin{subfigure}{.49\textwidth}
   \centering
   \includegraphics[width=.8\columnwidth, page=3]{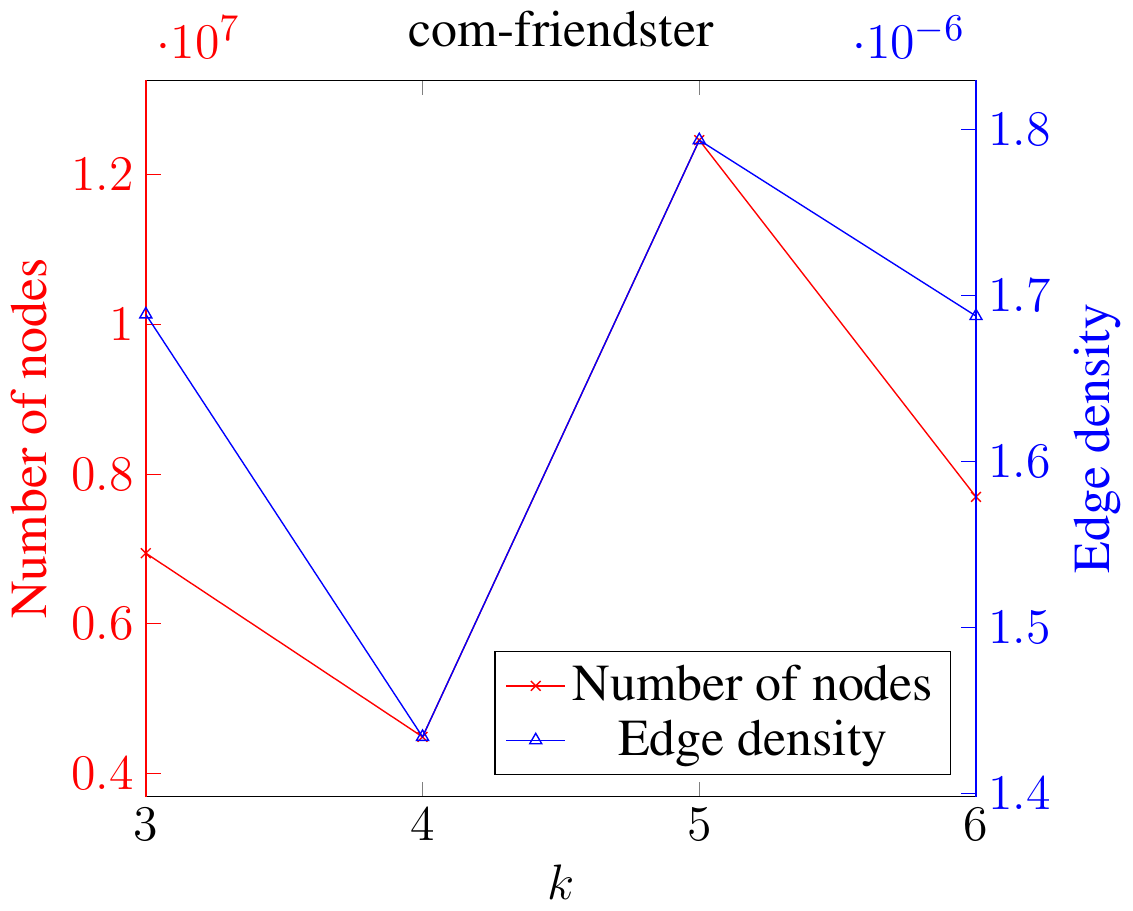}
   \end{subfigure}%
   \begin{subfigure}{.49\textwidth}
   \centering
   \includegraphics[width=.8\columnwidth, page=2]{images/fig_www.pdf}
      \end{subfigure}
    \caption{The frequencies of 4-clique counts per vertex, obtained using \ourcount, on the large graphs ClueWeb and Hyperlink2014. Note that the frequencies is plotted on a log-scale.}\label{fig:distribution} 
\end{figure*}

Figure~\ref{fig:ranks} shows the
preprocessing overheads and \kc counting times for
com-orkut using different orientations and node parallelism. 
The Goodrich-Pszona orientation is 2.86x slower than the
orientation using degree ordering, but this overhead is not
significant for large $k$ and the Goodrich-Pszona
orientation gives the fastest counting times for large
$k$. We found that the self-relative speedups of orienting the
graph alone were between 6.69--19.82x across all orientations, the
larger of which were found in the larger graphs.

\begin{figure*}[!t]
  \centering
\includegraphics[width=0.8\textwidth, page=2]{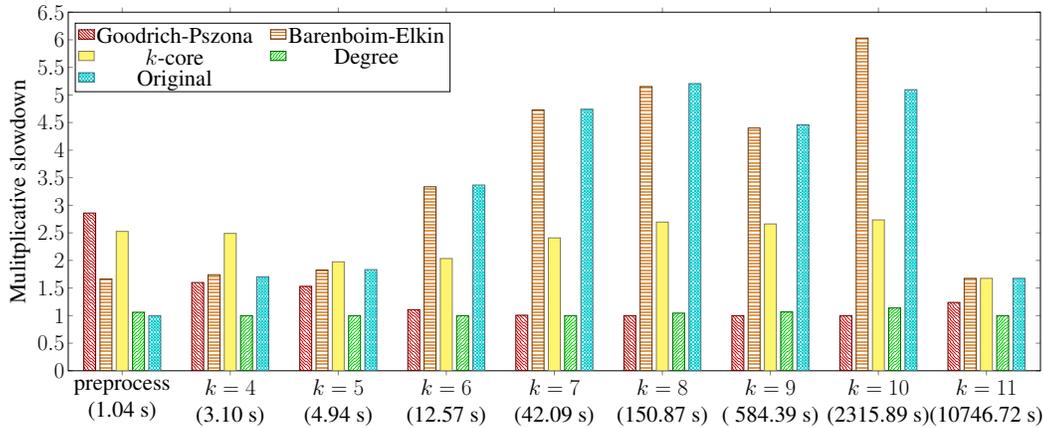}
  \caption{Parallel runtimes for \kc counting (\ourcount{}) on
  com-orkut for different orientations, using node parallelism. All
  times are scaled by the fastest parallel time (indicated in
  parentheses). The first set of bars show the preprocessing overhead of the different orientations. The remaining sets of bars show the performance including both preprocessing and counting.}\label{fig:ranks} 
\end{figure*}

In both \ourcount{} and \kclist{}, node parallelism is faster on small $k$, while
edge parallelism is faster on large $k$, due to the greater work required on large $k$
and the additional parallelism available in edge parallelism to take advantage
of this work. The
cutoff for when edge parallelism is generally faster than node
parallelism occurs around $k=8$. 
Figure~\ref{fig:node_edge} shows this behavior in \ourcount{}'s $k$-clique counting runtimes
on com-orkut, where for $k\geq 9$ edge parallelism becomes faster than
node parallelism.

Figure~\ref{fig:approx_count_friend} 
show the runtimes for our approximate counting algorithm on com-orkut and com-friendster.
We see that there is an inflection point
where after enough sparsification, obtaining \kc counts for
large $k$ is faster than for small $k$; this is because we cut off the
recursion when there are not enough vertices to
complete a \kc. We compute our error rates as $|\text{exact} -\allowbreak \text{approximate}| / \text{exact}$.
For $p=0.5$ on both com-orkut and com-friendster across
all $k$, we see between 2.42--87.56x speedups over exact counting (considering the best \kc counting runtimes) and between
0.39--1.85\% error. Our error rates degrade for higher $k$ and
lower $p$, but even for $p$ as low as $0.125$, we obtain between
5.32--2189.11x speedups over exact counting and between 0.42--5.05\% error.

We compare our approximate counting algorithms to approximate \kc
counting using \algname{MOTIVO}~\cite{Bressan2019}. We ran both the
naive sampling and adaptive graphlet sampling (AGS) options in \algname{MOTIVO}.  However,
\algname{MOTIVO} is unable to run on com-friendster because it runs
out of memory. Moreover, in order to achieve a 6\% error rate,
\algname{MOTIVO} takes between 92.71--177.29x the time that our
algorithm takes for 4-clique and 5-clique approximate counting on
com-orkut. \algname{MOTIVO} takes 168.84 seconds to approximate
6-clique counts with 31.55\% error, while our algorithm takes 0.49
seconds to approximate 6-clique counts with under 6\% error. However,
unlike our algorithm, \algname{MOTIVO} can estimate non-clique
subgraph counts.

\subsection{Peeling Results}

Figure~\ref{table:graph-peel-stats}
shows the peeling rounds, \kc core sizes, and approximate maximum \kc densities
that we obtained from our algorithms.

Figure~\ref{fig:buckets} shows the frequencies of the different
numbers of vertices peeled in each parallel round for com-orkut,
for $4 \leq k \leq 6$. A significant number of rounds contain
fewer than 50 vertices peeled, and by the time we reach the tail of
the histogram, there are very few parallel rounds with a large number
of vertices peeled.

Figures~\ref{fig:approx_peel_orkut} and~\ref{fig:approx_peel_friend} show the parallel runtimes of \ourapproxpeel and of \kclist's approximate peeling algorithm on com-orkut and com-friendster, respectively.  Note that while \kclist only provides a sequential implementation for exact $k$-clique peeling, \kclist provides a parallel implementation for approximate $k$-clique peeling.
We found
there was not a significant difference in performance across different
values of $\epsilon$ for both implementations. 
\ourapproxpeel is up to 29.59x
faster than \kclist{} for large $k$. However, \ourapproxpeel was
slower on com-friendster for small $k$ since \kclist{} uses a serial
heap to recompute vertices to be peeled; this is more amenable over
rounds containing fewer vertices, while our implementation incurs additional
overhead to recompute peeled vertices in parallel, which is mitigated
over larger $k$.  In terms of
percentage error in the maximum $k$-clique density obtained compared
to the density obtained from $k$-clique peeling, we see between
48.58--77.88\% error on com-orkut and 5.95--80.83\% error on
com-friendster.

\begin{figure*}
  \centering
\includegraphics[width=\textwidth, page=1]{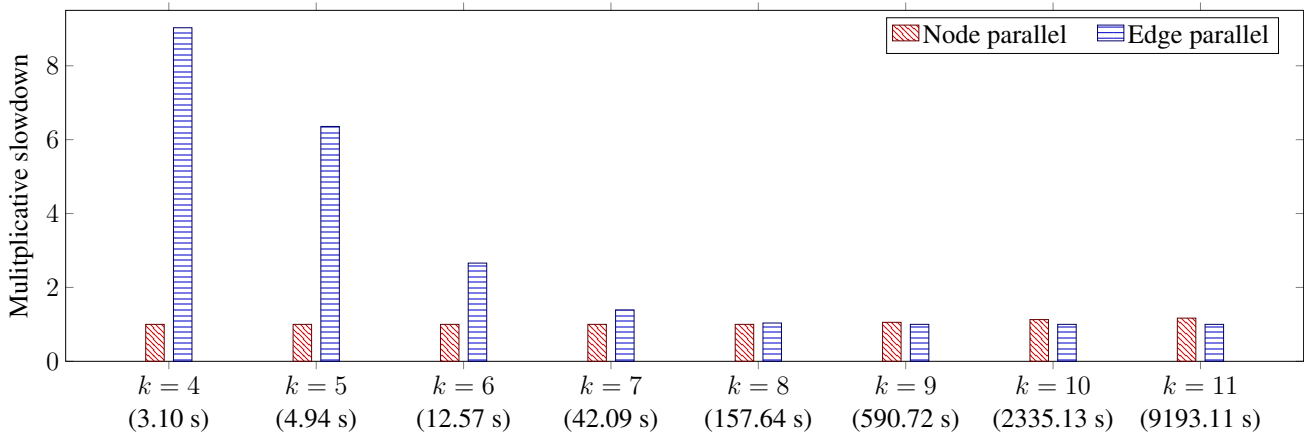}
  \caption{Parallel runtimes for $k$-clique counting (\ourcount{}) on com-orkut, considering node parallelism and edge parallelism, and fixing the orientation given by degree ordering. All times are scaled by the fastest parallel runtime, which are given in the parentheses. }\label{fig:node_edge} 
\end{figure*}

\begin{figure*}[!t]
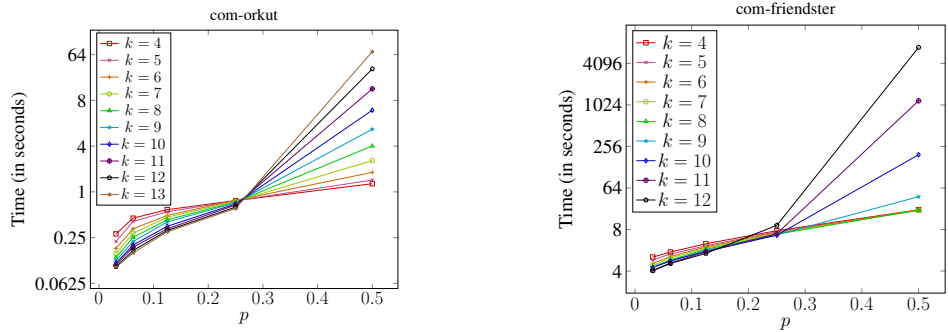

    \centering
   \begin{subfigure}{.3\textwidth}
   \centering
   \includegraphics[width=\textwidth, page=3]{images/fig.pdf}
   \end{subfigure}%
   \hspace{2cm}
   \begin{subfigure}{.3\textwidth}
   \centering
   \includegraphics[width=\textwidth, page=4]{images/fig.pdf}
      \end{subfigure}
    \caption{Parallel runtimes for approximate $k$-clique counting (\ourcount{}) on com-orkut and com-friendster, varying over $p = 1/c$  where $c$ is the number of colors used. The runtimes were obtained using the orientation given by degree ordering and node parallelism. The $y$-axis is in log-scale.}
  \label{fig:approx_count_friend}
\end{figure*}

\begin{table*}[t!]
  \small
\centering
\scalebox{0.9}{
\begin{tabular}{lllllllll}
\toprule
&    &  $k=3$ &$k=4$                & $k=5$   & $k=6$     & $k=7$       & $k=8$      & $k=9$              \\ \midrule
& $\rho_k$          & 5,626    & 11,669   & 17,720     & 22,091       & 23,988      & 23,095       & 21,538       \\
& Max density ($k$-clique peeling) &10,100&297,096 & 5,598,323 & 75,372,336  & 782,071,056	 & $6.183 \times 10^9$ & $4.080 \times 10^{10}$ \\
\hline

& $\rho_k$        &     483     & 441     & 343       & 240         & 166        & 127         & 103         \\
\textbf{com-dblp} & $k$-clique core  &   6,328    & 234,136  & 6,438,740   & 140,364,532   & $2.527\times 10^9$ & $3.862\times 10^{10}$ & $5.117\times 10^{11}$ \\
& Max density ($k$-clique peeling) &6,328 &234,136 &  6,438,740 & 140,364,532 & $2.527\times 10^9$ & $3.862 \times 10^{10}$ & $5.117 \times 10^{11}$ \\
 \hline

& $\rho_k$           &    36,752  & 94,931   & 160,577    & 210,966      & 236,623     & 241,330      &    ---         \\
\textbf{com-orkut} & $k$-clique core  & 7,117     & 117,182  & 2,115,900   & 29,272,988    & 312,629,724  & $2.741\times 10^9$  &    ---         \\ 
& Max density ($k$-clique peeling)&18,547& 340,997 & 4,882,477 & 73,696,814 & 883,634,847 & $8.332 \times 10^9$&  --- \\
\hline

& $\rho_k$         &       57,090    & 140,705  & 249,605    & 339,347      & ---     & ---      &      ---       \\
\textbf{com-friendster}& $k$-clique core   & 8,255   & 349,377  & 11,001,375  & 274,901,025   &        ---    &        ---     &        ---     \\ 
& Max density ($k$-clique peeling) & 9,521& 428,928 & 12,762,919 & 363,676,399&  --- &   ---&  --- \\
\hline

& $\rho_k$           & 13,899  & 29,514   & 42,994     & 50,159       &      ---      &      ---       &    ---         \\
\textbf{com-lj} & $k$-clique core   &  64,478    & 7,660,975 & 679,343,769 & $4.796\times 10^{10}$ &        ---    &      ---       &    ---        \\
& Max density ($k$-clique peeling) &72,255& 9,031,923 & 839,813,448 &$6.199 \times 10^{10}$&  --- &   ---&  --- \\
\end{tabular}
}
\caption{Relevant $k$-clique peeling statistics for the SNAP graphs that we experimented on. We do not have statistics for certain graphs for large values of $k$, because the corresponding $k$-clique peeling algorithms did not terminate in under 5 hours; these entries are represented by a dash. }\label{table:graph-peel-stats}
\end{table*}

\begin{figure*}[t]
\includegraphics[width=1\textwidth, page=7]{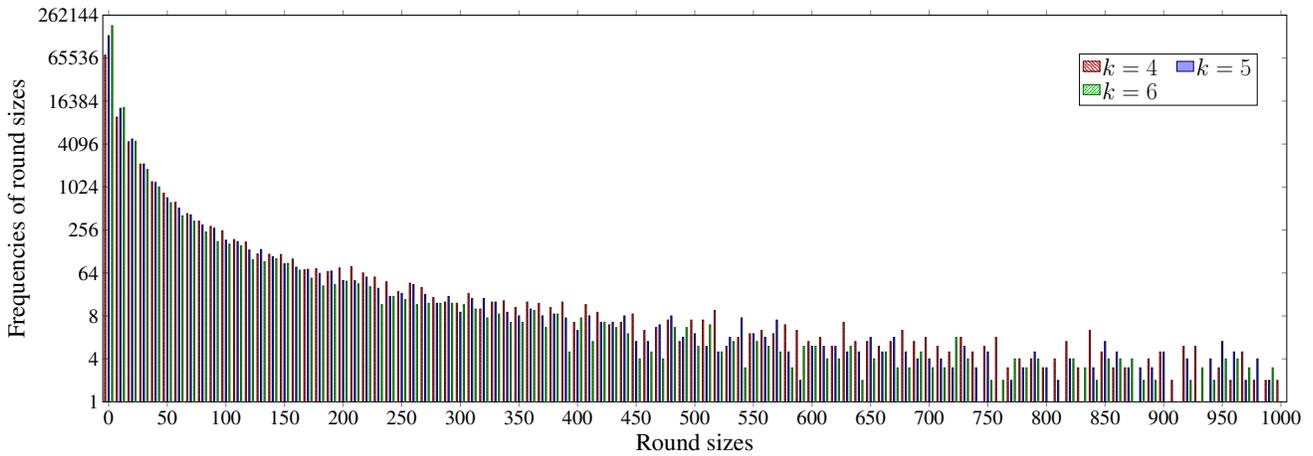}
  \caption{Frequencies of the number of vertices peeled in a parallel round using \ourpeel{}, for $k$-clique peeling on com-orkut ($4 \leq k \leq 6$). Rounds with more than 1000 vertices peeled have been truncated; these truncated round frequencies are very low, most often consisting of 0 rounds. The frequencies are given in log-scale.}\label{fig:buckets} 
\end{figure*}

\begin{figure*}[!t]
\centering
\includegraphics[width=1.1\columnwidth, page=5]{images/fig.pdf}
  \caption{Parallel runtimes on com-orkut for approximate $k$-clique peeling using \ourapproxpeel{} (solid lines) and \kclist{} (dashed lines). These runtimes were obtained by varying over $\epsilon$, giving a $1/(k(1+\epsilon))$-approximation of the $k$-clique densest subgraph. These runtimes were obtained using the orientation given by degree ordering, and the runtimes are given in log-scale. Moreover, we cut off \kclist{}'s runtimes at 5 hours, which occurred for $k=10$ over all $\varepsilon$.  }\label{fig:approx_peel_orkut} 
\end{figure*}

\begin{figure*}[!t]
  \centering
\includegraphics[width=0.7\columnwidth, page=6]{images/fig.pdf}
  \caption{Parallel runtimes on com-friendster for approximate $k$-clique peeling using \ourapproxpeel (solid lines) and \kclist{} (dashed lines). These runtimes were obtained by varying over $\epsilon$, giving a $1/(k(1+\epsilon))$-approximation of the $k$-clique densest subgraph. These runtimes were obtained using the orientation given by degree ordering, and the runtimes are given in log-scale. Moreover, we cut off \kclist{}'s runtimes at 5 hours, which occurred for $k=9$ over all $\varepsilon$.}\label{fig:approx_peel_friend} 
\end{figure*}

\end{document}